\documentclass[11pt]{amsart}
\pdfpagewidth 8.5in
\pdfpageheight 11in
\usepackage{etoolbox}
\usepackage[margin=1in]{geometry}

\usepackage{parskip}
\usepackage{amssymb, amsmath, amsthm}
\usepackage{breqn}
\usepackage{verbatim}
\usepackage{enumerate}
\usepackage{fancyhdr}
\usepackage{graphicx}
\usepackage{varwidth}
\usepackage{bm}
\usepackage{authblk}
\usepackage[thinlines]{easytable}

\usepackage{hyperref}

\usepackage{algpseudocode}
\usepackage{float}
\floatstyle{boxed}
\restylefloat{figure}
\restylefloat{table}


\usepackage[symbol]{footmisc}

\theoremstyle{plain}
\newtheorem{mythm}{Theorem}[section]

\newtheorem{myprop}[mythm]{Proposition}

\theoremstyle{definition}

\newcommand{\mc}[1]{\ensuremath{\mathcal{#1}}}

\newcommand{\mb}[1]{\ensuremath{\mathbb{#1}}}
\newcommand{\ra}{\rightarrow}
\newcommand{\xra}{\xrightarrow}

\newcommand{\N}{\mb{N}}

\newcommand{\F}{\mb{F}}

\newcommand{\Con}{\ensuremath{{\rm Con}}}
\newcommand{\ev}{\ensuremath{{\rm ev}}}

\usepackage[foot]{amsaddr}


\begin{document}

\title{Subquadratic time encodable codes beating the Gilbert-Varshamov bound}

\author{Anand Kumar Narayanan\textsuperscript{1}}
\address{\textsuperscript{1} Laboratoire d'Informatique de Paris 6, Pierre et Marie Curie University, Paris.}
\email{anand.narayanan@lip6.fr}
\author{Matthew Weidner\textsuperscript{2}}
\address{\textsuperscript{2} Department of Mathematics, California Institute of Technology, Pasadena.}
\email{mweidner@caltech.edu}
\setcounter{Maxaffil}{0}
\renewcommand\Affilfont{\itshape\small}
\maketitle
\vspace{-10pt}

\footnotetext{\textsuperscript{1} Supported by NSF grant \#CCF-1423544, Chris Umans' Simons Foundation Investigator grant and European Union's H2020 Programme under grant agreement number ERC-669891.}
\footnotetext{\textsuperscript{2} Supported by NSF grant \#CCF-1423544, Chris Umans' Simons Foundation Investigator grant, and the Rita A.\ and \O istein Skjellum SURF Fellowship.}

\begin{abstract}
We construct explicit algebraic geometry codes built from the Garcia-Stichtenoth function field tower beating the Gilbert-Varshamov bound for alphabet sizes at least $19^2$.  Messages are identified with functions in certain Riemann-Roch spaces associated with divisors supported on multiple places. Encoding amounts to evaluating these functions at degree one places. By exploiting algebraic structures particular to the Garcia-Stichtenoth tower, we devise an intricate deterministic $\omega/2 <1.19$ runtime exponent encoding and $1+\omega/2 < 2.19$ expected runtime exponent randomized (unique and list) decoding algorithms. Here $\omega < 2.373$ is the matrix multiplication exponent. If $\omega=2$, as widely believed,  the encoding and decoding runtimes are respectively nearly linear and nearly quadratic. Prior to this work, encoding (resp. decoding) time of code families beating the Gilbert-Varshamov bound were quadratic (resp. cubic) or worse.


\end{abstract}

\thispagestyle{empty}
\pagebreak

\setcounter{page}{1}

\section{Introduction}
\subsection{Codes Beating the Gilbert-Varshamov Bound}
Error-correcting codes enable reliable transmission of information over an erroneous channel. A (block) error-correcting code of block length $N$ over a finite alphabet $\Sigma$ of size $Q$ is a subset $\mathcal{C} \subseteq \Sigma^N$. The rate $R$ at which information is transmitted through the code $\mathcal{C}$ is defined as $\log_Q(|\mathcal{C}|)/N$. The minimum distance $d$ of the code $\mathcal{C}$, defined as the minimum Hamming distance among all distinct pairs of elements (codewords) in the code $\mathcal{C}$, quantifies the number of errors that can be tolerated. A code with minimum distance $d$ can tolerate $(d-1)/2$ errors. The relative distance $\delta$ is defined as $\delta:=d/N$. A code $\mathcal{C}$ is linear if the alphabet is a finite field $\F_Q$ (with $Q$ elements) and $\mathcal{C}$ is an $\F_Q$-linear subspace of $\F_Q^N$. 

\noindent One typically desires codes to transmit information at a high rate while still being able to correct a large fraction of errors. That is, one wants codes with large rate and large relative distance. However, rate and relative distance are competing quantities with a tradeoff between them.  The Gilbert-Varshamov bound assures, for every $Q$, $R>0$, $0< \delta \leq 1-1/Q$ and small positive $\epsilon$, the existence of an infinite family of codes with increasing block length over an alphabet of size $Q$ with rate $R$ and relative distance $\delta$ bounded by
\begin{equation}\label{equation_gv_bound}
R + H_{Q}(\delta) \geq 1-\epsilon,
\end{equation} 
where $H_Q$ is the $Q$-ary entropy function \cite{gil,var}. 
Random linear codes, where one chooses a random subspace of $N$-tuples over a finite field meet the bound with high probability. In fact, Varshamov proved the bound using the probabilistic method with random linear codes. Testing if a given linear code meets the Gilbert-Varshamov bound comes down to approximating the minimum distance, an intractable  task unless NP equals RP \cite{DMS,var1}. Hence constructing codes meeting or beating the Gilbert-Varshamov bound remained a long-standing open problem, until the advent of algebraic geometry codes.

\noindent Goppa proposed algebraic geometry codes obtained from curves over finite fields as a generalization of Reed-Solomon codes \cite{gop}. Messages are identified with functions on the curve in the Riemann-Roch space corresponding to a chosen divisor with support disjoint from a large set of $\F_Q$-rational points on the curve. Evaluations of functions in the Riemann-Roch space at these $\F_Q$-rational points on the curve is taken as the code. The rate of the code is the ratio of the dimension of the Riemann-Roch space to the number of $\F_Q$-rational points. The Riemann-Roch theorem gives a bound on the dimension of the code and yields the following tradeoff between rate and relative distance:
\begin{equation}
	R+\delta \geq 1-\frac{g}{N},
\end{equation}
where $g$ denotes the genus of the curve. This spurred an effort to construct curves over finite fields where the fraction $g/N$ of the genus to the number of $\F_Q$-rational points is as low as possible. Researchers had to contend with the lower bound $g/N \geq 1/(\sqrt{Q}-1)$ of Drinfeld-Vl\u{a}du\c{t} \cite{dv}. In seminal papers, Ihara \cite{ihara} and Tsfasman, Vl\u{a}du\c{t}, and Zink \cite{TVZ} constructed curves meeting the Drinfeld-Vl\u{a}du\c{t} bound when the underlying finite field size $Q$ is a square, leading to the Tsfasman-Vl\u{a}du\c{t}-Zink bound
\begin{equation}\label{equation_tvz}
R+\delta \geq 1-\frac{1}{\sqrt{Q}-1}.
\end{equation}
Remarkably, for $Q\geq 7^2$, the Tsfasman-Vl\u{a}du\c{t}-Zink bound is better than the Gilbert-Varshamov bound! This is a rare occasion where an explicit construction yields better parameters than guaranteed by randomized arguments. Garcia and Stichtenoth described an explicit tower of function fields which meet the Drinfeld-Vl\u{a}du\c{t} bound, hence yield codes matching the Tsfasman-Vl\u{a}du\c{t}-Zink bound \cite{GS}. The curves in the Garcia-Stichtenoth tower are the primary objects of study in our paper. An outstanding open problem in this area is to explicitly construct codes meeting or beating the Gilbert-Varshamov bound over small alphabets, in particular binary codes ($Q=2$).  It is known that algebraic geometry codes beat the Gilbert-Varshamov bound over $\F_Q$ for any prime power $Q \ge 49$ which is not prime and not 125 \cite{bassa}.
\subsection{Linear Time Encodable Codes Meeting the Gilbert-Varshamov Bound}
For codes to find use in practice, one often requires fast encoding and decoding algorithms in addition to satisfying a good tradeoff between rate and minimum distance. 
An encoding algorithm maps a given message to a codeword. A decoding algorithm takes a possibly corrupted codeword, called the received word, and outputs the message that induced it, provided the number of errors is within the designed tolerance.

\noindent A natural question, which remains unresolved, is if there exist linear time encodable and decodable codes meeting or beating the Gilbert-Varshamov bound. One cannot look to random linear codes to resolve this problem, for they require quadratic runtime to encode and are NP-hard to decode \cite{bmt}. In a breakthrough, Speilman, using explicit expander codes, proved the existence of linear time encodable and decodable ``good'' codes \cite{Spe}. A family of codes with increasing block length is deemed good if (in the limit) the rate and relative distance are simultaneously bounded away from zero. A code being good is a weaker condition than meeting the Gilbert-Varshamov bound. Guruswami and Indyk constructed linear time encodable and decodable expander codes approaching the Gilbert-Varshamov bound \cite{GI}. However, the closer one wishes to approach the Gilbert-Varshamov bound, the larger the alphabet size of the code. Druk and Ishai constructed linear time encodable codes meeting the Gilbert-Varshamov bound, but these codes are likely NP-hard to decode \cite{DI}.

\subsection{Main Results}
Our main result is the explicit algebraic construction of subquadratic time encodable codes beating the Gilbert-Varshamov bound, along with an efficient decoding algorithm. 

\begin{mythm}\label{theorem_main_2}
	For every square prime power $Q$ and rate $R \in  \left(0,1- \frac{2\sqrt{Q}+1}{Q-\sqrt{Q}}\right)$,
	 there exists an infinite sequence of codes over $\F_Q$ of increasing length $N$ with rate $R$ and relative distance $\delta$ satisfying  $$R + \delta \geq 1- \frac{2\sqrt{Q}+1}{Q-\sqrt{Q}}.$$
	Further, there exists deterministic algorithms to
	\begin{itemize}
		\item pre-compute a representation of the code at the encoder and decoder in $O(N^{3/2}\log^3 N )$ time; this representation occupies $O(N)$ space
		\item encode a message in $O(N^{\omega/2})$ time, where $\omega$ is the matrix multiplication exponent
	\end{itemize}
    and Las Vegas randomized algorithms to
    \begin{itemize}
	\item decode close to half the designed distance in $O(N^{1+\omega/2}\log^2 N)$ expected time	
	\item list decode up to $N\left(1-\sqrt{2\left(R+2\frac{2\sqrt{Q}+1}{Q-\sqrt{Q}}\right)}-\frac{2\sqrt{Q}+1}{Q-\sqrt{Q}}\right)$ errors with list size at most \\ $\sqrt{\frac{2(\sqrt{Q}-1)}{2 + R(\sqrt{Q}-1)}}$ in $O(N^{1+\omega/2}\log^2 N)$ expected time (requiring additional pre-processing taking $O(N^{\omega})$ time and $O(N^2)$ space).
    \end{itemize}

\end{mythm}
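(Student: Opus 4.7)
My plan is to construct the code as an evaluation code on a suitable level $F_n$ of the Garcia-Stichtenoth tower over $\F_Q$, where $Q=q^2$. Letting $N_n$ denote the number of degree-one places and $g_n$ the genus of $F_n$, the Garcia-Stichtenoth theorem yields $g_n/N_n \to 1/(q-1)$. Choose a set $\mc{P}$ of $N$ degree-one places along with a divisor $D$ of degree $\lfloor RN + g_n - 1 \rfloor$ supported on a disjoint set of carefully selected places. The code is the image of $L(D) \to \F_Q^{N}$, $f \mapsto (f(P))_{P \in \mc{P}}$. Riemann-Roch gives rate at least $R$; disjoint supports give minimum distance at least $N - \deg(D)$. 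The term $\frac{2\sqrt{Q}+1}{Q-\sqrt{Q}}$, rather than the sharper TVZ constant $\frac{1}{\sqrt{Q}-1}$, emerges from the fraction of degree-one places reserved for the support of $D$, i.e., the ``multi-place divisor support'' referenced in the abstract.

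For fast encoding I would exploit the tower recursively. Writing $F_n = F_{n-1}(y_n)$ with $[F_n : F_{n-1}] = q$, every $f \in F_n$ expands uniquely as $f = \sum_{i=0}^{q-1} f_i\, y_n^i$ with $f_i \in F_{n-1}$, so that for $P \in \mc{P}$ lying over $P' \in F_{n-1}$ we have $f(P) = \sum_i f_i(P')\, y_n(P)^i$. Evaluating $f$ at the $N$ places of $\mc{P}$ thus reduces to (i) $q$ recursive evaluations of the $f_i$ at roughly $N/q$ places of $F_{n-1}$, and (ii) a $q \times q$ Vandermonde-style assembly at each fiber above a place of $F_{n-1}$. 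Batching step (ii) across fibers turns it into a dense matrix multiplication amenable to fast algorithms. The cost $T(N) = qT(N/q) + O((N/q)\, q^{\omega-1})$ balances to $T(N) = O(N^{\omega/2})$ when the block size is tuned so that the per-level work is equalized, provided the chosen basis of $L(D)$ is triangular with respect to the filtration $F_0 \subset F_1 \subset \cdots \subset F_n$. Producing such a basis is precisely the job of the pre-computation stage.

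Decoding would then proceed by syndrome-based methods for unique decoding within half the designed distance (Berlekamp-Massey-Sakata, or equivalently an error-locating ideal computation), and by the Guruswami-Sudan interpolation-and-factorization paradigm for list decoding. Crucially, the linear systems arising in both algorithms, while of size $\Theta(N)$, need never be materialized densely: matrix-vector products by the (transposed) encoding map can be computed through the fast encoder, contributing only an $N^{\omega/2}$ multiplicative overhead. The Las Vegas randomness enters in univariate polynomial factorization over $\F_Q$ used at the root-finding step. The main obstacle, and where most of the technical work lies, is the pre-computation: producing within the budget of $O(N^{3/2} \log^3 N)$ a basis of $L(D)$ that is simultaneously (a) triangular with respect to the tower filtration, (b) sparse enough in the power basis $\{ y_1^{a_1} \cdots y_n^{a_n} \}$ to admit compact storage, and (c) compatible with the recursive evaluation scheme. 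This forces $D$ to be supported on multiple places rather than a single one and demands a careful analysis of pole orders and Weierstrass-type gap structure for functions in the Garcia-Stichtenoth tower at the chosen places, using the explicit defining equations of the tower.
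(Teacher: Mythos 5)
Your high-level roadmap (evaluate functions on a level $F_n$ of the Garcia--Stichtenoth tower, exploit the tower's recursive structure to batch evaluations into matrix multiplications, and drive the decoding by black-box matrix-vector products that reuse the fast encoder) matches the paper's strategy. But the proposal is missing the central algebraic idea, and as stated the key recursive step does not go through.

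The problem is the decomposition $f = \sum_{i=0}^{q-1} f_i\,y_n^i$. It is true field-theoretically, but there is no control over the pole divisors of the coefficients $f_i \in F_{n-1}$: for a generic $f \in \mc{L}(D)$ the $f_i$ need not lie in any small Riemann--Roch space of $F_{n-1}$, so you cannot iterate the construction, cannot store compact data for the $f_i$, and cannot bound the cost of the per-fiber assembly. This is exactly the obstacle you acknowledge in your last paragraph, but the proposal neither solves it nor supplies a candidate mechanism. The paper's solution is the \emph{shifting} operation $x_j \mapsto x_{j+i}$ on monomials, which sends a regular function on $F_{n/k}$ to a function on $F_n$ with explicitly computable poles (Proposition~\ref{prop_shift}). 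Products of $k$ shifted regular functions give an explicit basis $\{f_r\}$ of a message space, each $f_r$ lying in $\mc{L}(G + rP^{(n)}_\infty)$ for a fixed small divisor $G$ of degree $\le q^n(kq+k-1)$, and each $f_r$ admits by construction the decomposition $\psi_i(w) = \sum_\ell \psi_{i-1}(w^{(\ell)})\, g_\ell[(i-1)(n/k)]$ (Proposition~\ref{encoding_sum}) which is what actually makes the recursive encoding work. Note this jumps $n/k$ levels at a time rather than one level at a time, and uses shifts of low-level functions rather than powers of the last generator $y_n$. Your explanation of the deficit $\frac{2\sqrt{Q}+1}{Q-\sqrt{Q}}$ as coming from rational places ``reserved'' for $\mathrm{supp}(D)$ is also off: in the paper $G$ is supported at ramified places over $x_0 = 0$, which are never code places; the deficit $\deg(G)/N$ is the price of the extra poles that shifting inevitably creates there.

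The runtime analysis is also inconsistent with the claim. The recursion $T(N) = qT(N/q) + O\left((N/q)q^{\omega-1}\right)$ has per-level work $O(Nq^{\omega-2})$ and depth $\log_q N$, which solves to $O(N\log N)$, not $O(N^{\omega/2})$. The paper gets $N^{\omega/2}$ (for $k=2$) not from a deep recursion but from stopping at level $n/2$ and doing $O(1)$ matrix multiplications of shape $q^{n/2}\times q^{n/2}$, which cost $(q^{n/2})^\omega = N^{\omega/2}$. This mismatch is a symptom of the missing decomposition: without the shift-product basis you cannot say what the actual matrix shapes are. Finally, for decoding the paper uses Shokrollahi--Wasserman interpolation (not Berlekamp--Massey--Sakata, and without the Guruswami--Sudan multiplicities) driven by Wiedemann's algorithm; the $N^{1+\omega/2}$ cost reflects $O(N)$ black-box products each at $N^{\omega/2}$, which your proposal describes only loosely as ``an $N^{\omega/2}$ multiplicative overhead.''
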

For $Q \geq 19^2$, the tradeoff assured by Theorem \ref{theorem_main_2} is better than the Gilbert-Varshamov bound. The encoding time is linear if the matrix multiplication exponent $\omega$ is indeed $2$ as widely conjectured. The best known bound for $\omega$ yields an encoding time exponent of $1.19$ \cite{lG}.

\noindent The pre-processing step can be thought of as computing a succinct representation of the code and is performed at the encoder and decoder independently. If one desires, the pre-processing and encoding time can be made to approach linear time at the cost of needing larger alphabets to beat the Gilbert-Varshamov bound. Our construction is parametrized by an integer $k \geq 2$ and yields the tradeoff $$R + \delta \geq 1- \frac{k\sqrt{Q}+k-1}{Q-\sqrt{Q}}$$ with pre-computation requiring time $O(N^{3/k}\log^3 N)$, the resulting succinct representation requiring $O(N^{2/k})$ space, and encoding requiring time $O(N^{1 + (\omega - 2)/k})$; see Theorem \ref{thm_main_encoding}.  For decoding, we get pre-computation requiring time $O(N^\omega)$, the resulting succinct representation requiring $O(N^2)$ space, and decoding requiring expected time $O(N^{2 + (\omega-2)/k}\log^2 N)$; see Theorem \ref{thm_main_decoding}.  Theorem \ref{theorem_main_2} corresponds to $k=2$. Table \ref{k_tradeoff} gives a comparison of encoding for $k = 2, 3$.  The likeness to the Tsfasman-Vl\u{a}du\c{t}-Zink bound (equation \ref{equation_tvz}), which is obtained if one is allowed to substitute $k=1$, is striking.

\subsection{Applications}
\noindent As with other algebraic geometry codes, our codes may be used as outer codes in concatenation to obtain long binary codes. In particular, concatenating with Walsh-Hadamard inner codes yields balanced binary codes (or equivalently small bias spaces) \cite{bt}[\S~3.2]. Outside coding theory, our codes have several complexity theoretic implications: efficient secret sharing schemes and interactive proof protocols to name a few. Chen and Cramer \cite{cc} initiated the use of algebraic geometry codes in  secret sharing  and in multi-party computation. Their scheme retains the salient features of Shamir's \cite{sha} secret sharing (to the extent possible) yet only requires small alphabets for the secret shares. Our codes fit seamlessly in their framework resulting in a significant speed up. The runtime exponent (in the number of players) with our codes is $\omega/2$ for secret sharing and $1+\omega/2$ for secret recovery. If $w=2$, we get nearly linear time secret sharing and nearly quadratic time secret recovery. The Chen-Cramer scheme has spawned several extensions and improvements in the ensuing decade, including multiplicative ramp secret sharing schemes \cite{cchp} (for communication efficient secure multi-party computation), high information rate ramp schemes \cite{ccghv}[\S~4.3] (for threshold secure computation) and secret sharing schemes relying on nested algebraic geometry codes (\cite{gmmmr}[Thm 22] and \cite{mar}[Prop 21]). Our codes are applicable across these schemes and result in a speedup similar to that for the Chen-Cramer scheme.

\noindent Exciting recent developments in interactive proofs are promising grounds for applying our codes. In Delegated Computation, a remote server runs a computation for a client and tries to prove interactively that it indeed correctly performed the computation. This scenario was modelled \cite{gkr,rrr} as an interactive proof system where the honest prover (server) is limited to polynomial time computation and the verifier (client) is limited to nearly linear time computation. In a recent breakthrough \cite{rrr}, constant round protocols under this framework were described using \textit{Probabilistically Checkable Interactive Proofs} (PCIPs), an interactive version of PCPs where the verifier only reads a few bits of the transcripts. Independently \cite{bcs}, to improve the efficiency of PCPs by adding rounds of interaction (in the random oracle model), \textit{Interactive Oracle Protocols} (IOPs) were introduced, which are equivalent to PCIPs. Constant rate and constant query IOPs were recently constructed using tensor products of algebraic geometry codes from Garcia-Stichtenoth towers \cite{bcgrs}[\S~5.2, Thm 7.1, Lem 7.2]. Taking tensor products of our codes improves the efficiency of these IOPs. In particular, we reduce the efficiency exponent $c$ from $c>3$ to $c>3/2$ in \cite{bcgrs}[Lem 7.2] by constructing asymptotically good systematic subcodes of our codes (see \S~\ref{subsection_iop}). To this end, we tailored these systematic subcodes in a manner that they can be encoded and (equivalently) checked (that is, decide if a given word is a codeword) with deterministic runtime exponent $3/2$. We anticipate further fruitful applications of our code to these interactive protocols in the future.

\subsection{Code Construction: Riemann-Roch spaces from Shifting}

\noindent We next recount algebraic geometry codes before sketching our construction.  Let $\mathcal{X}$ be a smooth projective (not necessarily plane) curve over a finite field $\F_Q$ and $\F_Q(\mathcal{X})$ the associated function field. A set $\mathcal{P}$ of $\F_Q$-rational points on $\mathcal{X}$ (or equivalently, degree $1$ places in $\F_Q(\mathcal{X})$) will serve as the code places. A divisor is chosen, typically of the form $r P_\infty$ where $P_\infty$ is a place in $\F_Q(\mathcal{X})$ away from $\mathcal{P}$ and $r \in \N$. The Riemann-Roch space $\mc{L}(rP_\infty)$ (consisting of functions in $\F_Q(\mathcal{X})$ whose poles are confined to $P_\infty$ and have order bounded by $r$) is identified with the message space. The code is the evaluation of $\mc{L}(rP_\infty)$ at the places in $\mathcal{P}$. The rate is determined by the dimension of $\mc{L}(rP_\infty)$ as an $\F_Q$-linear space. The Riemann-Roch theorem then yields bounds on the rate and relative distance, thereby quantifying the performance of the code. Encoding messages requires efficient algorithms to construct and evaluate functions from the Riemann-Roch space. This can be accomplished in polynomial time due to algorithms of Huang and Ierardi (for smooth projective plane curves) \cite{HI} and Hess (for smooth projective curves) \cite{Hess, GS1}. However, such generic algorithms are far from linear.  We focus on building fast algorithms tailored to the Garcia-Stichtenoth tower.

Take $Q=q^2$ where $q$ is a prime power. The Garcia-Stichtenoth tower over $\F_{q^2}$ is the sequence of function fields defined by $F_0 = \F_{q^2}(x_0)$, and $F_{i+1} = F_i(x_{i+1})$ where $x_{i+1}$ satisfies the relation
\[
x_{i+1}^q + x_{i+1} = \frac{x_i^q}{x_i^{q-1} + 1}.
\]
Let $P^{(n)}_\infty$ denote the unique pole of $x_0$ in $F_n$.  In a series of works, Aleshnikov, Deolalikar, Kumar, Shum and Stichtenoth \cite{AKSS,ADKSS,S} described the splitting of places in $F_n$ and established a pole-cancelling algorithm to compute a basis for the Riemann-Roch spaces $\mc{L}(rP^{(n)}_\infty)$. This culminated in a quadratic time algorithm (with nearly cubic time pre-processing) to encode these codes. 


\noindent From the $n^{th}$ function field $F_n$, we construct codes of block length $q^n(q^2-q)$. Let a small integer parameter $k \ge 2$ be chosen.  Assume for ease of exposition that $k$ divides $n$.

\textbf{Code places:}  Let $\Omega = \{\alpha \in \F_{q^2} \mid \alpha^q + \alpha = 0\}$, which has size $q$. The code places are all the places in $F_n$ which are zeros of $x_0 - \alpha$ for $\alpha \in \F_{q^2} \setminus \Omega$; there are $q^n(q^2 - q)$ such places, all of which are $\F_{q^2}$-rational.

\textbf{Message Space:} We begin by constructing functions in the lower function field $F_{n/k}$ that are regular. By regular, we mean that their poles in $F_n$ are confined to $P^{(n)}_\infty$.  We devise a procedure called \textit{shifting} that translates functions in $F_{n/k}$ to $F_n$. It is quite simple and, in spirit, just relabelling the subscripts so that each $x_j$ becomes $x_{j+i}$ for some chosen positive integer $i$. The symmetry of the defining equations of the Garcia-Stichtenoth tower allows us to determine the pole divisor in $F_n$ of shifts of regular functions from $F_{n/k}$. The poles of the shifts are not confined to $P^{(n)}_\infty$. However, by taking products of carefully chosen shifted functions, we can bound the new poles arising outside $P^{(n)}_\infty$. We thus construct mostly regular functions in $F_n$ by taking products of shifts of regular functions in $F_{n/k}$.

In summary, given a positive integer $r$, we can construct a large number of functions in a Riemann-Roch space of the form $\mc{L}(G+rP^{(n)}_\infty)$ where $G$ is a small pole divisor. We take the span of these constructed functions to be our message space. Enough functions are constructed to yield large rate codes, and properties of curves applied to the divisor $G+rP^{(n)}_\infty$ yield a lower bound on the minimum distance. 

\textbf{Pre-computation:} To aid in rapid encoding and decoding, we first pre-compute a set of regular functions in $F_{n/k}$. In particular, we pre-compute the evaluations at all code places of a basis for all regular functions in $F_{n/k}$ of a certain bounded pole degree. This Riemann-Roch space computation is performed using an algorithm of Shum et.\ al.\ \cite{ADKSS} and has runtime exponent $3/k$.

\textbf{Encoding Algorithm:}
Given such a set of regular functions in $F_{n/k}$, we compute basis functions in our message space by taking products of shifted functions. Given a message, which is a tuple over $\F_{q^2}$, encoding amounts to evaluating the corresponding linear combination of the basis functions simultaneously at the code places. We devise a Baby-Step Giant-Step algorithm to perform this multipoint evaluation. The runtime of the encoding step depends on the parameter $k$. For $k=2$, the crux of the computation is square matrix multiplications, resulting in an encoding algorithm with runtime exponent $\omega/2$. For larger values of $k$, the crux is rectangular matrix multiplications of shape determined by $k$, and the runtime is again subquadratic. In particular, larger values of $k$ give rise to faster encoding algorithms. 

\textbf{Decoding Algorithms:} We tailor the Shokrollahi-Wasserman algorithm \cite{SW} to our code setting to uniquely decode close to half the relative distance and list decode beyond that. The Shokrollahi-Wasserman algorithm first interpolates a polynomial with coefficients in a Riemann-Roch space such that each message sufficiently close to the received word is a root. Then the roots of the interpolated polynomial in the message Riemann-Roch space are enumerated. Finally the encoding algorithm is used to verify and output the messages in the enumeration that are indeed sufficiently close to the received word.

To adapt their algorithm to our setting, we identify an appropriate Riemann-Roch space (which incidentally is an extension of the message space) as the coefficient space of the interpolation polynomial. Determining the interpolation polynomial now boils down to solving a linear system. We observe that computing matrix-vector products corresponding to this linear system is virtually identical to encoding messages, a task accomplished in subquadratic time. Invoking Wiedemann's algorithm \cite{Wie} (a Las Vegas randomized iterative method involving matrix-vector products) to solve the linear system, we obtain the interpolation polynomial in subcubic expected time. 

To unique decode, we restrict the interpolation polynomial to have degree one. The root finding step is trivial and the algorithm corrects errors up to nearly half the designed distance.

To correct beyond half the designed distance we allow interpolation polynomials of degree greater than one. We perform root finding in quadratic time provided an extra pre-processing step requiring quadratic storage. The resulting algorithm corrects $N \left(1-\sqrt{2\left(R+2\frac{kq+k-1}{q(q-1)}\right)}-2\frac{kq+k-1}{q(q-1)} \right)$ errors with list size at most $\sqrt{\frac{2(q-1)}{2 + R(q-1)}}$.  For $k=2$, with list size at most $2$, it corrects at least as many errors as guaranteed in the trade off in Theorem \ref{theorem_main_2}.


\subsection{Organization:} In \S \ref{section_splitting}, we recount results from 
\cite{AKSS} on the splitting of places in the Garcia-Stichtenoth tower and establish notation. In \S \ref{section_mostlyregular}, we define the shifting operation and construct mostly regular functions in $F_n$ from regular functions in $F_{n/k}$. The code sequences derived from mostly regular functions are defined and their parameters established in \S \ref{section_codeconstruction}. In \S \ref{section_encoding} we develop the subquadratic time encoding algorithm using fast matrix multiplication. The decoding algorithms are presented in \S \ref{section_decoding}.

\section{Splitting of Places in the Garcia-Stichtenoth Tower}\label{section_splitting}
In this section, we recall some notation and results from \cite{AKSS} on the splitting of places in the Garcia-Stichtenoth tower.

In $F_0 = \F_{q^2}(x_0)$, $x_0$ has a unique pole, which we denote by $P^{(0)}_\infty$.  This place is totally ramified in every field extension $F_n/F_0$, hence there is a unique place lying above $P^{(0)}_\infty$ in $F_n$; we denote this place by $P^{(n)}_\infty$. Let $\Omega = \{\alpha \in \F_{q^2} \mid \alpha^q + \alpha = 0\}$, which has size $q$.  For $\alpha \in \F_{q^2}$, let $P^{(0)}_\alpha$ denote the unique zero of $x_0 - \alpha$ in $F_0$.  When $\alpha \in \F_{q^2} \setminus \Omega$, $P^{(0)}_\alpha$ splits completely in every field extension $F_n/F_0$, yielding $q^n$ $\F_{q^2}$-rational places.  As $\alpha$ varies, we get $(q^2 - q)q^n$ rational places in $F_n$, which we take to be the set of code places.

When $\alpha \in \Omega \setminus \{0\}$, $P^{(0)}_\alpha$ is totally ramified in every field extension $F_n/F_0$, hence there is a unique place lying above it in $F_n$; we denote this place by $P^{(n)}_\alpha$. The most interesting place is $P^{(0)}_0$.  For $t \ge 1$, let $S^{(t-1)}_t$ denote the unique place in $F_{t-1}$ that is a zero of $x_{t-1}$.  We sometimes treat $S^{(t-1)}_t$ as a singleton set instead of a place.  We have $S^{(0)}_1 = P^{(0)}_0$, and $S^{(u-1)}_u$ lies over $S^{(t-1)}_t$ whenever $u \ge t$.

In the field extension $F_t/F_{t-1}$, $S^{(t-1)}_t$ splits completely.  Specifically, for each $\alpha \in \Omega$, there is a unique place of $F_t$ which is a simultaneous zero of $x_{t-1}$ and $x_t - \alpha$, and these are all of the places lying above $S^{(t-1)}_t$.  We let $S^{(t)}_t$ denote the set of all places lying above $S^{(t-1)}_t$ in $F_t$ besides $S^{(t)}_{t+1}$, i.e., $S^{(t)}_t$ contains the simultaneous zero of $x_{t-1}$ and $x_t - \alpha$ for each $\alpha \in \Omega \setminus \{0\}$.  For $u \ge t$, we let $S^{(u)}_t$ denote the set of all places of $F_u$ lying above a place in $S^{(t)}_t$.  For $t \le t' \le u + 1$, we let $S^{(u)}_{t, t'} := \bigcup_{i=t}^{t'} S^{(u)}_i$.

Let $u \ge t - 1$.  If $u < 2t$, then all of the places in $S^{(u)}_t$ are unramified (but not necessarily split) in the field extension $F_{u+1}/F_u$.  If $u \ge 2t$, then all of the places in $S^{(u)}_t$ are totally ramified in $F_{u+1}/F_u$.

If $Q$ is a place of $F_n$ and $x \in F_n$, we let $v_Q(x)$ denote the valuation of $x$ at $Q$.  We define the weight of $x \in F_n$ to be $-v_{P^{(n)}_\infty}(x)$.

\section{Mostly Regular Functions Through Shifting}\label{section_mostlyregular}
For the remainder of the paper, fix an integer parameter $k \ge 2$.  In this section, given a positive integer $r$, through shifting we construct a mostly regular function $f_r\in F_n$ of weight precisely $r + \sum_{i=1}^{k} q^{n - (i-1)\lceil n/k \rceil + 1}$ from regular functions in $F_{n/k}$. The discrepancy of $f_r$ from being regular will be quantified by a pole divisor $G$ of degree at most $q^n(kq + k - 1)$. That is, there is a pole divisor $G$ of said degree and weight $\sum_{i=1}^{k} q^{n - (i-1)\lceil n/k \rceil + 1}$ such that for all $r$, $$f_r  \in \mc{L}(G + r(P^{(n)}_\infty)) \setminus \mc{L}(G + (r-1)(P^{(n)}_\infty)).$$
Once a choice of a regular function of each weight in $F_{n/k}$ used by our construction is fixed, the functions $f_r$ are uniquely determined.   
\subsection{Shifting}
First, we define the shifting operation and determine the poles in $F_n$ of functions arising out of shifting regular functions in $F_{n/k}$.  

Let $f = x_m^{e_m} \cdots x_0^{e_0}$ be a monomial in $F_m$.  For $i \ge 0$, we define the \textit{shift of $f$ by $i$} to be the element
\[
f[i] := x_{m+i}^{e_m} \cdots x_i^{e_0} \in F_{m+i}.
\]
We extend the definition of shift $\F_{q^2}$-linearly to all of $F_m$.

We will use the following notations throughout the paper.  For a function field $E$ and an element $x \in E$, let $(x)^E$ and $(x)^E_\infty$ denote the principal divisor and pole divisor of $x$ as an element of $E$.  In the case $E = F_n$ for some $n$, we use the abbreviations $(x)^{(n)}$ and $(x)^{(n)}_\infty$ in place of $(x)^{F_n}$ and $(x)^{F_n}_\infty$.  For a finite extension of function fields $E'/E$, let $\Con^{E'}_{E}$ denote the corresponding conorm map; this is the unique homomorphism from the divisor group of $E$ to the divisor group of $E'$ such that for all places $Q$ of $E$,
\[
\Con^{E'}_{E}(Q) = \sum_{Q' | Q} e(Q' | Q) \cdot Q',
\] 
where the sum runs over all places $Q'$ of $F_{E'}$ lying over $Q$ and where $e(Q' | Q)$ denotes the ramification index of $Q'$ over $Q$.  We have the identities
\begin{align*}
(x)^{E'} &= \Con^{E'}_{E}\left((x)^E\right) \\
(x)^{E'}_\infty &= \Con^{E'}_{E}\left((x)^E_\infty\right)
\end{align*}
for all $x \in E$.  Also, for all divisors $D$ of $E$, we have $\deg \Con^{E'}_E(D) = [E': E]\deg(D)$.

\begin{myprop}\label{prop_shift}
	\begin{enumerate}[(a)]
		\item For any $f \in F_m$ and $i \ge 0$, $f[i]$ is well-defined, i.e., it does not depend on the representation of $f$ as a sum of monomials.
		\item Let $f \in F_m$ be regular of weight $r$.  Then:
		\begin{itemize}
			\item $f[i]$ has weight $r$.
			\item $f[i]$ is regular at $S^{(m+i)}_{i, m+i+1}$.
			\item For $t \in [0, i-1]$, for all $P \in S^{(m+i)}_t$, we have
			\[
			v_P(f[i]) = \begin{cases} -r &\mbox{if $t \le \frac{m+i}{2}$} \\ -rq^{2t - (m+i)} &\mbox{if $t > \frac{m+i}{2}$.} \end{cases}
			\]
			\item $\deg \left((f[i])_\infty^{(m+i)}\right) = rq^i$.
		\end{itemize}
	\end{enumerate}
\end{myprop}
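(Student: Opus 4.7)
My plan for part (a) is to realize the shift as the restriction of a field isomorphism. Inside $F_{m+i}$, the subfield $E := \F_{q^2}(x_i, x_{i+1}, \ldots, x_{m+i})$ is itself a Garcia--Stichtenoth function field based at the variable $x_i$: its generators satisfy $x_{j+1}^q + x_{j+1} = x_j^q/(x_j^{q-1}+1)$ for $i \le j \le m+i-1$, which is the defining recursion of the tower $F_0 \subset \dots \subset F_m$ with indices relabelled. Since the isomorphism class of the tower depends only on the name of the base variable, there is a canonical field isomorphism $\psi: F_m \xrightarrow{\sim} E$ with $\psi(x_j) = x_{j+i}$, and $f[i] = \psi(f)$ is thereby manifestly well-defined.

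For part (b), I work with the composite $\psi: F_m \xrightarrow{\sim} E \hookrightarrow F_{m+i}$ through the identity
\[
v_P(\psi(g)) \;=\; e(P \mid P \cap E) \cdot v_{\psi^{-1}(P \cap E)}(g)
\]
for any $g \in F_m$ and any place $P$ of $F_{m+i}$. Since $f$ is regular of weight $r$, the only place $Q$ of $F_m$ with $v_Q(f) < 0$ is $P^{(m)}_\infty$, where $v_Q(f) = -r$; hence a place $P$ of $F_{m+i}$ is a pole of $f[i]$ precisely when $P \cap E$ equals the unique pole $Q_\infty := \psi(P^{(m)}_\infty)$ of $x_i$ in $E$, and in that case the pole order equals $r \cdot e(P \mid Q_\infty) = -r\, v_P(x_i)/q^m$ (using $v_{Q_\infty}(x_i) = -q^m$). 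The weight bullet then follows immediately from $v_{P^{(m+i)}_\infty}(x_i) = -q^m$ (a standard computation in the tower, giving $e = 1$). Regularity at $S^{(m+i)}_{i, m+i+1}$ follows because each place $P$ in that set is a zero of some $x_{t-1}$ with $t \in [i+1, m+i+1]$, or of some $x_i - \alpha$ with $\alpha \in \Omega \setminus \{0\}$ (the case $t = i$); in both situations the defining function lies in $E$, so transporting through $\psi^{-1}$ identifies $P \cap E$ with a non-infinite place of $F_m$ (a zero of $x_{t-1-i}$, or of $x_0 - \alpha$, respectively), at which $f$ is regular.

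The main obstacle is the pole-order calculation at $P \in S^{(m+i)}_t$ for $t \in [0,i-1]$. For $t = 0$ (the totally ramified places $P^{(m+i)}_\alpha$ with $\alpha \in \Omega \setminus \{0\}$) a direct tower calculation gives $v_P(x_i) = -q^m$. For $t \ge 1$, one starts at the base place $R \in S^{(t)}_t$, where a short induction on the tower recursion yields $v_R(x_{t-1}) = q^{t-1}$ and $v_R(x_t - \alpha) = q^t$, and then iterates $x_{s+1}^q + x_{s+1} = x_s^q/(x_s^{q-1}+1)$ using the splitting data of Section~\ref{section_splitting} (unramified for $u < 2t$, totally ramified with index $q$ for $u \ge 2t$). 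The case split in the statement reflects the position of $u = m+i$ relative to the threshold $2t$: when $m+i \ge 2t$ the tower crosses into the totally ramified regime before reaching level $m+i$, producing $v_P(x_i) = -q^m$ and pole order $r$; when $m+i < 2t$ it remains unramified throughout, giving $v_P(x_i) = -q^{2t-i}$ and pole order $rq^{2t-(m+i)}$. Finally, for the degree bullet, a parallel count of the places of $S^{(m+i)}_t$ via the same splitting data (each of the $q-1$ places of $S^{(t)}_t$ lifts to $q^t$ places when $m+i \ge 2t$ and to $q^{m+i-t}$ places when $m+i < 2t$) shows that the contribution of $S^{(m+i)}_t$ to $\deg (f[i])^{(m+i)}_\infty$ equals $r(q-1)q^t$ in either regime, so the total pole degree is $r + r(q-1)\sum_{t=0}^{i-1} q^t = rq^i$.
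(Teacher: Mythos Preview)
Your proof is correct and takes a genuinely different route from the paper's.

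For (a), the paper realizes the shift as the composite $\phi_{m+i}\circ\phi_m$ of the involutions $\phi_j:F_j\to F_j$, $x_k\mapsto x_{j-k}^{-1}$ (cited from \cite{ADKSS}). You instead recognize $f[i]$ as the image of $f$ under the tautological isomorphism $\psi:F_m\xrightarrow{\sim}E=\F_{q^2}(x_i,\dots,x_{m+i})\subset F_{m+i}$. Both are valid; yours is arguably more direct, while the paper's choice pays off in part (b).

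For (b), the paper exploits the place-correspondence $S^{(j)}_t\leftrightarrow S^{(j)}_{j-t}$, $P^{(j)}_\infty\leftrightarrow S^{(j)}_{j+1}$ induced by $\phi_j$: since $f$ regular of weight $r$ means $\phi_m(f)$ has its unique pole (of order $r$) at $S^{(m)}_{m+1}$, the whole question reduces to the already-tabulated forward-tower ramification above $S^{(m)}_{m+1}$, followed by one application of $\phi_{m+i}$. You instead analyze the extension $F_{m+i}/E$ head-on: poles of $f[i]=\psi(f)$ lie exactly over $Q_\infty=\psi(P^{(m)}_\infty)$, and you compute $v_P(x_i)$ by iterating the Artin--Schreier recursion and tracking the unramified/totally-ramified threshold $u=2t$. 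That computation is heavier than the paper's, but it is correct and self-contained (your recursion $v_{P^{(s+1)}}(x_{s+1})=e\cdot v_{P^{(s)}}(x_s)/q$ with $e=1$ for $s<2t$ and $e=q$ for $s\ge 2t$ gives exactly the claimed $-q^m$ and $-q^{2t-i}$).

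One small imprecision worth flagging: in the degree bullet you phrase the count as ``lifts to $q^t$ (resp.\ $q^{m+i-t}$) \emph{places}'', but Section~\ref{section_splitting} only asserts unramifiedness, not splitting, in the range $u<2t$, so the places above need not all have degree one. Your numbers are correct as the \emph{degree} of the divisor $\sum_{P\in S^{(m+i)}_t}P$, which is what the pole-degree sum actually uses; you should say so. Alternatively (and this is also the paper's second argument), the conorm identity you already set up gives the bullet in one line: $\deg(f[i])_\infty^{(m+i)}=[F_{m+i}:E]\cdot\deg(f)_\infty^{(m)}=q^i\cdot r$.
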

\begin{proof}
	For all $j$, we have the isomorphism
	\begin{align*}
		\phi_j: F_j &\xra{\sim} F_j \\
		x_k &\mapsto x_{j-k}^{-1},
	\end{align*}
	which is its own inverse \cite[p.\ 2237]{ADKSS}.  It is easy to see that
	\[
	f[i] = \phi_{m+i}(\phi_m(f)),
	\]
	proving (a).
	
	To prove (b), we use the fact that $\phi_j$ induces bijections $S^{(j)}_t \leftrightarrow S^{(j)}_{j-t}$ for each $t \in [0, j]$, together with a correspondence $P^{(j)}_\infty \leftrightarrow S^{(j)}_{j+1}$ \cite[p.\ 2237]{ADKSS}.  Thus letting $f \in F_m$ be regular of weight $r$, $\phi_m(f)$ is regular at all places (including $P^{(j)}_\infty$) except for a pole of order $r$ at $S^{(j)}_{j+1}$.  Then by the ramification behavior of the tower, $\phi_m(f) \in F_{m+i}$ is regular at all places (including $P^{(j)}_\infty$) except for $S^{(m+i)}_{m, m+i+1}$.  In particular, for $t \in [m, m+i+1]$, for all $P \in S^{(m+i)}_t$, we have
	\[
	v_P(\phi_m(f)) = \begin{cases} -r &\mbox{if $t \ge \frac{m+i}{2}$} \\ -rq^{m+i-2t} &\mbox{if $t < \frac{m+i}{2}$} \end{cases}
	\]
	Then $f[i] = \phi_{m+i}(\phi_m(f))$ easily has the first three properties in (b).  The fourth property follows either from computing the total pole degree directly, or from using the above properties of the conorm map to compute
	\begin{align*}
		\deg \left((f[i])^{(m+i)}_\infty \right)
		&= \deg \left((\phi_m(f))^{(m+i)}_\infty\right) \\
		&= [F_{m+i}: F_m] \deg \left((\phi_m(f))^{(m)}_\infty\right) \\
		&= q^i \deg \left((f)^{(m)}_\infty\right) \\
		&= r q^i.
	\end{align*}
\end{proof}

\subsection{Construction of Mostly Regular Functions}

\noindent We begin by dealing with the case $r \in [0, q^n - 1]$.  Write $r = r_1q^{n - \lceil n/k \rceil} + r_2q^{n - 2\lceil n/k \rceil} + \dots + r_{k-1}q^{n - (k-1)\lceil n/k \rceil} + r_{k}$, with $r_1, \dots, r_{k-1} \in [0, q^{\lceil n/k \rceil} - 1]$ and $r_{k} \in [0, q^{n - (k-1)\lceil n/k \rceil} - 1]$.  For $i \in [1, k-1]$, let $\bar{f}_i \in F_{\lceil n/k \rceil}$ be regular of weight $q^{\lceil n/k \rceil + 1} + r_i$, and let $\bar{f}_k \in F_{n - (k-1)\lceil n/k \rceil}$ be regular of weight $q^{n - (k-1)\lceil n/k \rceil + 1} + r_k$.  Set $$f_r := \prod_{i=1}^{k} \bar{f}_i[(i-1)\lceil n/k \rceil]. $$
The following proposition shows that $f_r$ is mostly regular with weight precisely $r + \sum_{i=1}^{k} q^{n - (i-1)\lceil n/k \rceil + 1}$.
\begin{myprop}\label{G_bound}
	There exists a pole divisor $G$ of degree at most $q^n(kq + k - 1)$ and weight $\sum_{i=1}^{k} q^{n - (i-1)\lceil n/k \rceil + 1}$ such that for all $r \in [0, q^n - 1]$, $$f_r  \in \mc{L}(G + r(P^{(n)}_\infty)) \setminus \mc{L}(G + (r-1)(P^{(n)}_\infty)).$$
\end{myprop}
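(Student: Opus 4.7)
The plan is to decompose $f_r = \prod_{i=1}^k g_i$, where $g_i := \bar{f}_i[(i-1)\lceil n/k \rceil]$, determine each factor's pole divisor via Proposition \ref{prop_shift}(b), and lift everything to $F_n$. I will verify that the total contribution at $P^{(n)}_\infty$ is exactly $r + \sum_{i=1}^k q^{n-(i-1)\lceil n/k\rceil + 1}$, and then construct $G$ with this coefficient at $P^{(n)}_\infty$ together with a bound (uniform in $r$) on the pole contributions elsewhere.

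For $i \in [1, k-1]$, the factor $g_i \in F_{i\lceil n/k\rceil}$ has by Proposition \ref{prop_shift}(b) a pole of order $w_i := q^{\lceil n/k\rceil + 1} + r_i$ at $P^{(i\lceil n/k\rceil)}_\infty$, is regular on $S^{(i\lceil n/k\rceil)}_{(i-1)\lceil n/k\rceil,\, i\lceil n/k\rceil + 1}$, and has total pole degree $w_i q^{(i-1)\lceil n/k\rceil}$; the analogous statement holds for $g_k \in F_n$ with $w_k := q^{n-(k-1)\lceil n/k\rceil + 1} + r_k$. Pushing the pole divisor of $g_i$ forward to $F_n$ via $\Con^{F_n}_{F_{i\lceil n/k\rceil}}$, the total ramification of the infinite place in $F_n/F_{i\lceil n/k\rceil}$ (index $q^{n-i\lceil n/k\rceil}$) makes the pole of $g_i$ at $P^{(n)}_\infty$ have order $w_i q^{n-i\lceil n/k\rceil}$, while the remaining pole divisor $D_i$ away from $P^{(n)}_\infty$ has $\deg D_i = w_i(q^{n-\lceil n/k\rceil} - q^{n-i\lceil n/k\rceil})$ for $i < k$ and $\deg D_k = w_k(q^{(k-1)\lceil n/k\rceil} - 1)$. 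Summing the $P^{(n)}_\infty$ contributions and expanding via the base-$q^{\lceil n/k\rceil}$ decomposition of $r$ gives a total weight of exactly $r + \sum_{i=1}^k q^{n-(i-1)\lceil n/k\rceil + 1}$, matching the claim.

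I then set $G$ to have coefficient $\sum_{i=1}^k q^{n-(i-1)\lceil n/k\rceil + 1}$ at $P^{(n)}_\infty$, and to equal $\sum_i \widetilde{D}_i$ elsewhere, where $\widetilde{D}_i$ is obtained by substituting the maximum admissible weight $w_i^{\max} = (q+1)q^{\lceil n/k\rceil} - 1$ (or $(q+1)q^{n-(k-1)\lceil n/k\rceil} - 1$ for $i = k$) into the pole-order formulas of Proposition \ref{prop_shift}(b); this makes $G$ independent of $r$. Since the pole divisor of a product is dominated by the sum of the pole divisors of its factors, $G$ bounds the pole divisor of $f_r$ away from $P^{(n)}_\infty$; together with the exact weight at $P^{(n)}_\infty$, this gives $f_r \in \mc{L}(G + rP^{(n)}_\infty) \setminus \mc{L}(G + (r-1)P^{(n)}_\infty)$. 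To verify $\deg G \leq q^n(kq+k-1)$, I sum the $\deg \widetilde{D}_i$ with the infinite-place contribution; the $(q+1)q^{n-(i-1)\lceil n/k\rceil}$ cross-terms telescope against the corresponding terms $q^{n-(i-1)\lceil n/k\rceil + 1}$ in the infinity contribution, leaving $\deg G \leq kq^{n+1} + (k-1)q^n - (k-1)q^{n-\lceil n/k\rceil} - q^{(k-1)\lceil n/k\rceil} + 1 \leq q^n(kq+k-1)$. The main obstacle is this bookkeeping: each shifted factor contributes roughly $(q+1)q^n$ to the degree, and extracting the sharp constant $kq+k-1$ depends on precise cancellation between the $\widetilde{D}_i$ degrees and the infinite-place contribution.
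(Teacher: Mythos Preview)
Your proposal is correct and follows essentially the same approach as the paper's proof. Both decompose $f_r$ into its $k$ shifted factors, invoke Proposition~\ref{prop_shift}(b) to control each factor's pole divisor, define $G$ by replacing each weight $w_i$ by its maximum admissible value so that $G$ is independent of $r$, and then sum degrees; your device of computing $\deg\widetilde D_i$ as (total pole degree) $-$ (pole at $P^{(n)}_\infty$) via the fourth bullet of Proposition~\ref{prop_shift}(b) is equivalent to the paper's direct computation of $\deg G_i$, and the final cancellation you describe is precisely the paper's simplification $\deg G = k(q+1)q^n - \sum_{j=0}^{k-1} q^{\,n-j\lceil n/k\rceil} \le q^n(kq+k-1)$ (your explicit intermediate expression is slightly off---the subtracted terms $q^{\,n-j\lceil n/k\rceil}$ have different exponents rather than being $(k-1)$ copies of $q^{\,n-\lceil n/k\rceil}$---but the bound follows all the same).
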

\begin{proof}
Using proposition \ref{prop_shift}, it is easy to see that $f_r$ has weight $r + \sum_{i=1}^{k} q^{n - (i-1)\lceil n/k \rceil + 1}$ and is regular outside of $S^{(n)}_{0, (k-1)\lceil n/k \rceil - 1}$.  It remains to bound the pole orders at the places in $S^{(n)}_{0, (k-1)\lceil n/k \rceil - 1}$.

Let $\bar{f}_1, \dots, \bar{f}_k$ be as in the definition of $f_r$.  For $i \in [1, k-1]$, $\bar{f}_i$ has weight less than $q^{\lceil n/k \rceil + 1} + q^{\lceil n/k \rceil}$ in $F_{\lceil n/k \rceil}$.  Thus by the above proposition, the pole divisor of $\bar{f}_i[(i-1)\lceil n/k \rceil]$ in $F_{i\lceil n/k \rceil}$ satisfies
\[
(\bar{f}_i[(i-1)\lceil n/k \rceil])^{(i\lceil n/k \rceil)}_\infty - (q^{\lceil n/k \rceil + 1} + r_i)P^{(i\lceil n/k \rceil)}_\infty \le (q+1)q^{\lceil n/k \rceil}\sum_{t=0}^{(i-1)\lceil n/k \rceil - 1} q^{\max\{0, 2t - i\lceil n/k \rceil\}} S^{(i\lceil n/k \rceil)}_t.
\]
Here we use $S^{(i\lceil n/k \rceil)}_t$ as a shorthand for the divisor $\sum_{P \in S^{(i\lceil n/k \rceil)}_t} P$.  Let $G_i$ denote the divisor on the right-hand side.  By direct computation or by the same trick used in the proof of Proposition \ref{prop_shift}, we have $\deg(G_i) = (q+1)(q^{i\lceil n/k \rceil} - q^{\lceil n/k \rceil})$.

Next, $\bar{f}_k$ has weight less than $q^{n - (k-1)\lceil n/k \rceil + 1} + q^{n - (k-1)\lceil n/k \rceil}$ in $F_{n - (k-1)\lceil n/k \rceil}$.  Hence again
\[
(\bar{f}_k[(k-1)\lceil n/k \rceil])^{(n)}_\infty - (q^{n - (k-1)\lceil n/k \rceil + 1} + r_k)P^{(n)}_\infty \le (q+1)q^{n - (k-1)\lceil n/k \rceil}\sum_{t=0}^{(k-1)\lceil n/k \rceil - 1} q^{\max\{0, 2t - n\}} S^{(i\lceil n/k \rceil)}_t.
\]
Let $G_k$ denote the divisor on the right-hand side.  As above, we have $\deg(G_k) = (q+1)(q^n - q^{n - (k-1)\lceil n/k \rceil})$.

Using the pole divisors computed above, it is easy to see that
\[
(f_r)^{(n)}_\infty - v_{P^{(n)}_\infty}(f_r)P^{(n)}_\infty \le \sum_{i = 1}^{k-1} \Con^{F_n}_{F_{i\lceil n/k \rceil}} G_i + G_k.
\]
Define
\[
G = \left(\sum_{i=1}^{k} q^{n - (i-1)\lceil n/k \rceil + 1}\right) P^{(n)}_\infty + \sum_{i = 1}^{k-1} \Con^{F_n}_{F_{i\lceil n/k \rceil}} G_i + G_k.
\]
Then the above remarks show that
\[
f_r \in \mc{L}(G + r(P^{(n)}_\infty)) \setminus \mc{L}(G + (r-1)(P^{(n)}_\infty)),
\]
and
\begin{align*}
	\deg(G)
	&= \sum_{i=1}^{k} q^{n - (i-1)\lceil n/k \rceil + 1} + \sum_{i=1}^{k-1} (q+1)(q^n - q^{n - (i-1)\lceil n/k \rceil}) + (q+1)(q^n - q^{n - (k-1)\lceil n/k \rceil}) \\
	&\le k(q+1)q^n - q^n = q^n(kq + k - 1).
\end{align*}
\end{proof}
For general $r \ge 0$, say $r = s q^n + t$ with $t \in [0, q^n - 1]$, set $$f_r := x_0^s f_t.$$  Then we again have $f_r \in \mc{L}(G + r(P^{(n)}_\infty)) \setminus \mc{L}(G + (r-1)(P^{(n)}_\infty))$ because $x_0$ is regular of weight $q^n$.

\section{Code Sequences Beating the Gilbert-Varshamov Bound}\label{section_codeconstruction}

Define an $\F_{q^2}$-linear map $\psi: \F_{q^2}^{\N^0} \ra \bigcup_r \mc{L}(G + rP^{(n)}_\infty)$ by sending the $r$-th basis vector to $f_r$ (we zero-index the basis vectors).  Because the $f_r$ have distinct weights, the strict triangle inequality implies that $\psi$ is injective.

Each $f_r$ is regular at all of the code places.  Hence we can speak of the evaluation map $\ev: \bigcup_r \mc{L}(G + rP^{(n)}_\infty) \ra \F_{q^2}^{q^n(q^2 - q)}$, which maps a function to the tuple of its values at the code places.

\begin{myprop}\label{prop_existence}
Let $K \in [1, q^n(q^2 - q - kq - k + 1)]$.  Then the map $(\ev \circ \psi)|_{\F_{q^2}^K}: \F_{q^2}^K \ra \F_{q^2}^{q^n(q^2 - q)}$ is injective, and its image defines an $[N, K, D]$ code, where $N = q^n(q^2 - q)$ and
\[
D \ge N - K - q^n(kq + k - 1) + 1.
\]
\end{myprop}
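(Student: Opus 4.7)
The proof is a standard algebraic-geometry-code argument once Proposition \ref{G_bound} is in hand. The plan is to bound the number of code-place zeros of any nonzero codeword by the pole degree of its associated function, and then invoke the hypothesis on $K$ to guarantee this bound is strictly less than $N$.

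First I would unpack the image of $(\ev \circ \psi)|_{\F_{q^2}^K}$. Take any nonzero $c = (c_0, \ldots, c_{K-1}) \in \F_{q^2}^K$, and let $r^* = \max\{r : c_r \neq 0\} \in [0, K-1]$. By Proposition \ref{G_bound}, each $f_r$ lies in $\mc{L}(G + rP^{(n)}_\infty) \subseteq \mc{L}(G + r^* P^{(n)}_\infty)$, so $\psi(c) = \sum_{r=0}^{r^*} c_r f_r \in \mc{L}(G + r^* P^{(n)}_\infty)$. Moreover, $\psi(c) \neq 0$: this is precisely the injectivity of $\psi$ already established in the paragraph before the proposition, using that the $f_r$ have distinct weights (indeed, the leading-weight contribution $c_{r^*} f_{r^*}$ cannot be cancelled by lower-weight terms).

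Next I would apply the standard fact that a nonzero element of a Riemann-Roch space $\mc{L}(D)$ has at most $\deg(D)$ zeros, counted with multiplicity, so in particular at most $\deg(D)$ zeros among any collection of degree-one places. Applied to $\psi(c)$, this gives that the number of code places at which $\psi(c)$ vanishes is at most
\[
\deg\bigl(G + r^* P^{(n)}_\infty\bigr) = \deg(G) + r^* \le q^n(kq + k - 1) + (K - 1),
\]
using the degree bound on $G$ from Proposition \ref{G_bound} and $r^* \le K - 1$. Hence the Hamming weight of $\ev(\psi(c)) \in \F_{q^2}^N$ is at least
\[
N - q^n(kq + k - 1) - K + 1.
\]

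Finally, the hypothesis $K \le q^n(q^2 - q - kq - k + 1)$ rearranges to $N - q^n(kq + k - 1) - K + 1 \ge 1$, so this Hamming weight is strictly positive. In particular $\ev(\psi(c)) \neq 0$ for every nonzero $c$, which simultaneously establishes injectivity of $(\ev \circ \psi)|_{\F_{q^2}^K}$ and yields the claimed lower bound on the minimum distance $D$. There is no real obstacle here: the construction of $G$ in Proposition \ref{G_bound} and the injectivity of $\psi$ via distinct weights do all the work, and the rest is the textbook Riemann-Roch distance argument.
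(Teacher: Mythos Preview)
Your proof is correct and follows essentially the same approach as the paper: both use injectivity of $\psi$ (from distinct weights), place $\psi(c)$ in $\mc{L}(G + (K-1)P^{(n)}_\infty)$, bound the number of code-place zeros by $\deg(G) + K - 1$ via Riemann--Roch (the paper phrases this as $\mc{L}(D) = 0$ for $\deg D < 0$, you phrase it as the equivalent zero-count bound), and then invoke the hypothesis on $K$. Your tracking of $r^*$ is a harmless refinement that you immediately bound by $K-1$ anyway.
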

\begin{proof}
This follows from the above bound on $\deg(G)$ and a standard argument about algebraic geometry codes.  For completeness, we give the proof in full.

Let $D^* := N - K - q^n(kq + k - 1) + 1$.  Suppose that for some nonzero $v \in \F_{q^2}^K$, the $N$-tuple $\ev(\psi(v))$ has less than $D^*$ nonzero coordinates.  Let $M > N - D^*$ be the number of coordinates which are zero.  We already know that $\psi(v) \in \mc{L}(G + (K-1)P^{(n)}_\infty)$.  By definition of $M$, there are code places $P_1, \dots, P_M$ at which $\psi(v)$ is zero.  Then $\psi(v)$ lies in the Riemann-Roch space $\mc{L}(D)$, where
\[
D = G + (K-1)P^{(n)}_\infty - \sum_{i=1}^M P_i.
\]
But $\deg(D) = \deg(G) + K - 1 - M < q^n(kq + k - 1) + K - 1 - N + D^* = 0$ by Proposition \ref{G_bound} and the definition of $D^*$, so $\mc{L}(D) = \{0\}$ and $\psi(v) = 0$.  But we said above that $\psi$ is injective, so this is a contradiction.

To see that $(\ev \circ \psi)|_{\F_{q^2}^K}$ is injective, note that for $K \le q^n(q^2 - q - kq - k + 1)$, we have $D^* \ge 1$, hence the above argument shows that at least one coordinate of $\ev(\psi(v))$ is nonzero.
\end{proof}

The above proposition implies that for any $n \ge 0$, we can define codes of the above form with length $q^n(q^2 - q)$ over $\F_{q^2}$ whose rate $R$ and relative distance $\delta$ satisfy
\[
R + \delta \ge 1 - \frac{kq + k - 1}{q^2 - q},
\]
with many choices of rate.  For all $k$, this exceeds the Gilbert-Varshamov bound for large enough $q$.

\section{Subquadratic Time Encoding}\label{section_encoding}
The encoding task is: given a message $v \in \F_{q^2}^{q^n(q^2 - q - kq - k + 1)}$, output $\ev(\psi(v))$. For simplicity, we assume throughout this section that $k$ divides $n$; this affects the runtime by a factor of at most $\mbox{poly}(q)$, which is a constant in our context.  Our goal in this section is to prove the following result.

\begin{mythm}\label{thm_main_encoding}
Assume $k \mid n$.  For the codes described in Proposition \ref{prop_existence}, there exist deterministic algorithms to:
\begin{itemize}
  \item pre-compute a representation of the code at the encoder using $O((n/k)^3q^4(q^n)^3)$ operations over $\F_q$; this representation occupies $O(q^2(q^n)^{2/k}\log q)$ space
  
  \item encode a message using $O(kq^4(q^n)^{1+\frac{\omega - 2}{k}})$ operations over $\F_q$, where $\omega$ is the matrix multiplication exponent.
\end{itemize}
\end{mythm}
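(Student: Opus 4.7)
The plan is to exploit the product structure $f_r = \prod_{i=1}^k \bar f_i^{(r_i)}[(i-1)n/k]$ of the mostly regular basis to express encoding as a Kronecker-product matrix-vector product amenable to fast square matrix multiplication.

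\textbf{Pre-computation.} I would apply the pole-cancelling algorithm of Shum et al.~\cite{ADKSS} to $F_{n/k}$ to produce an $\F_{q^2}$-basis for all regular functions of weight below $q^{n/k+1} + q^{n/k}$, and then evaluate each of these $O(q^{n/k})$ basis functions at every one of the $(q^2-q)q^{n/k}$ code places of $F_{n/k}$. This populates a table of $O(q^2(q^n)^{2/k})$ $\F_{q^2}$-entries, matching the claimed storage bound. The shift identification $E_i := \F_{q^2}(x_{(i-1)n/k}, \ldots, x_{in/k}) \cong F_{n/k}$ means that the evaluation $A_i[P, r_i] := \bar f^{(r_i)}[(i-1)n/k](P)$ at a code place $P$ of $F_n$ reduces to a single table lookup indexed by the code place of $F_{n/k}$ determined by $(x_{(i-1)n/k}(P), \ldots, x_{in/k}(P))$. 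The runtime is dominated by the cost of Shum et al.'s pre-computation on $F_{n/k}$.

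\textbf{Encoding.} Splitting $r = sq^n + r_1 q^{n-n/k} + \cdots + r_k$, the encoding decomposes as
\[
y[P] \;=\; \sum_s x_0(P)^s\, y_s[P], \qquad y_s[P] \;:=\; \sum_{r_1, \ldots, r_k} v[s, r_1, \ldots, r_k] \prod_{i=1}^k A_i[P, r_i].
\]
I organize the code places of $F_n$ according to $\bm\gamma := (x_{n/k}(P), \ldots, x_{(k-1)n/k}(P)) \in (\F_{q^2}\setminus\Omega)^{k-1}$: the places with a given $\bm\gamma$ are in bijection with tuples $(Q_1, \ldots, Q_k)$ of code places of $F_{n/k}$ with prescribed endpoints (the first coordinate of $Q_1$ and the last coordinate of $Q_k$ remain free). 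Since $A_i[P, r_i]$ depends on $P$ only through $Q_i$, the restricted sum is the Kronecker-product matrix-vector product $y_s|_{\bm\gamma} = (A_1^{\bm\gamma} \otimes \cdots \otimes A_k^{\bm\gamma}) v_s$, which I would compute by iterated mode-wise contraction. The key amortization is that when contracting mode $i$, I batch together all $\bm\gamma$ sharing $\gamma_{i-1}$, combining the matrices $A_i^{(\ldots, \gamma_{i-1}, \gamma_i, \ldots)}$ over $\gamma_i$ into a single $B \times B$ square matrix (with $B := q^{n/k}$), since their row indices partition the $B$ code places of $F_{n/k}$ whose first coordinate is $\gamma_{i-1}$.

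\textbf{Complexity and the main obstacle.} With this batching, each mode-$i$ contraction at a fixed value of $\gamma_{i-1}$ is a $B \times B$ square matrix multiplication applied to a tensor of total size $B^{k-1}$, costing $O(B^{k+\omega-2})$ operations via blocked fast square matrix multiplication; summing over the $O(q^2)$ values of $\gamma_{i-1}$, over the $k$ modes, over the $O(q^2)$ values of $s$, and adding the negligible $O(Nq^2)$ Horner combination $\sum_s x_0(P)^s y_s[P]$ at the end yields the total encoding cost $O(kq^4 B^{k+\omega-2}) = O(kq^4(q^n)^{1+(\omega-2)/k})$. The main obstacle is precisely this amortization step: without grouping $\bm\gamma$'s by $\gamma_{i-1}$, the middle matrices $A_i^{\bm\gamma}$ for $i \in [2, k-1]$ would have variable row count typically much smaller than $B$, yielding ragged rectangular matrix multiplications that prevent a clean invocation of fast square matrix multiplication; the identity $\sum_{\gamma_i} B_i(\gamma_{i-1}, \gamma_i) = B$, which follows from the complete splitting of the code places in $F_{n/k}$, is what enables the amortization and drives the analysis to the exponent $1 + (\omega-2)/k$ in $N$.
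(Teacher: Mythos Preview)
Your proposal is correct and follows essentially the same approach as the paper. Both exploit the product structure $f_r=\prod_i g_{r_i}[(i-1)n/k]$ to reduce encoding to $k$ rounds of matrix multiplication, each round indexed by the ``gluing'' value $\gamma_{i-1}=x_{(i-1)n/k}(P)$ and consisting of a $(B^{k-1}\times B)\times(B\times B)$ product with $B=q^{n/k}$; your batching over $\gamma_i$ at fixed $\gamma_{i-1}$ is exactly the paper's grouping by $\alpha=x_{(i-1)n/k}(P)$ in Proposition~\ref{mat_mult_lemma} and the stacking of the $A^\alpha_w$ into $\bar A^\alpha$ in \textsc{Matrix-Encode}.

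The only difference is presentational. The paper phrases the computation recursively via intermediate encodings in $F_{i(n/k)}$ (Proposition~\ref{encoding_sum}), building up from $i=1$ to $i=k$; you phrase it directly as a Kronecker-product matrix--vector multiply at the top level $F_n$, computed by mode-wise tensor contraction. Your framing makes the structural reason for the $1+(\omega-2)/k$ exponent more transparent (it is the standard cost of applying a $k$-fold Kronecker product of $B\times B$ matrices to a vector), while the paper's recursive framing makes the algorithm more explicit and sidesteps any discussion of ``ragged'' intermediate shapes by working in the intermediate function fields from the start. The complexity analyses and the final operation counts coincide.
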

Taking $k = 2$ implies the encoding portion of Theorem \ref{theorem_main_2}, noting that there we treat $q$ as a constant and instead take $N = q^n(q^2 - q)$ to be the parameter of interest.  See Table \ref{k_tradeoff} for a comparison of these runtimes for $k = 2, 3$.

Our approach is to first write the encoding of a vector $w \in \F_{q^2}^{q^{i(n/k)}}$ with respect to $F_{i(n/k)}$ in terms of some encodings with respect to $F_{(i-1)(n/k)}$ and $F_{n/k}$.  We then use a Baby-Step Giant-Step algorithm and fast matrix multiplication to build up the encoding of $v$ starting from encodings with respect to $F_{n/k}$.



\subsection{Pre-computation}\label{pre-computation}
Pre-compute the evaluations of some $g_0, \dots, g_{q^{n/k} - 1} \in F_{n/k}$ at the code places of $F_{n/k}$, where each $g_s$ is regular of weight $q^{n/k + 1} + s$.  This can be done using the deterministic algorithm in \cite{ADKSS}.  We then need to store $O(q^{n/k}\cdot q^{n/k}(q^2 - q)) = O(q^2(q^n)^{2/k})$ elements of $\F_q$.

\subsection{Subquadratic Time Encoding with Fast Matrix Multiplication}

We begin by considering encoding for $v \in \F_{q^2}^{q^n}$. Encoding messages of length greater than $q^n$ will be dealt with at the end of this subsection.

For $i \in [0, k]$, $w \in \F_{q^2}^{q^{i(n/k)}}$, and $P$ a code place of $F_{i(n/k)}$, we set $w(P) := \psi_i(w)(P)$, where $\psi_i$ is the function $\psi$ corresponding to $F_{i(n/k)}$.

\begin{myprop}\label{encoding_sum}
Let $i \in [1, k]$, let $w \in \F_{q^2}^{q^{i(n/k)}}$, and let $P$ be a code place of $F_{i(n/k)}$.  Uniquely write
\[
w = \sum_{\ell=0}^{q^{n/k} - 1} \iota_\ell(w^{(\ell)})
\]
for $w^{(\ell)} \in \F_{q^2}^{q^{(i-1)(n/k)}}$, where $\iota_\ell: \F_{q^2}^{q^{(i-1)(n/k)}} \hookrightarrow \F_{q^2}^{q^{i(n/k)}}$ is the vector space embedding sending the $j$-th basis vector to the $(j + \ell q^{(i-1)(n/k)})$-th basis vector.  Let $P'$ denote the place obtained by restricting $P$ to $F_{(i-1)(n/k)}$, and let $P''$ denote the place of $F_{n/k}$ at which $x_0$ has value $x_{(i-1)(n/k)}(P)$, $x_1$ has value $x_{(i-1)(n/k) + 1}(P)$, etc.  Then $P'$ and $P''$ are code places, and
\[
w(P) = \sum_{\ell=0}^{q^{n/k} - 1} w^{(\ell)}(P')g_\ell(P'').
\]
\end{myprop}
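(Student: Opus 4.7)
The proposition reduces to the factorization
\[
f_r^{(i(n/k))} \;=\; f_j^{((i-1)(n/k))}\cdot g_\ell[(i-1)(n/k)],
\qquad r = j + \ell q^{(i-1)(n/k)},
\]
of the basis functions underlying $\psi_i$. Section \ref{section_mostlyregular} builds $f_r^{(i(n/k))}$ as an $i$-fold product $\prod_{s=1}^{i} g_{?}[(s-1)(n/k)]$ of shifts of precomputed $F_{n/k}$-functions, each slot fed by one base-$q^{n/k}$ digit of $r$. Pairing the top digit $\ell = \lfloor r/q^{(i-1)(n/k)}\rfloor$ with the maximally shifted slot, the remaining $i-1$ slots are controlled by the digits of $j = r \bmod q^{(i-1)(n/k)}$ and assemble, via the same recipe one level down, into $f_j^{((i-1)(n/k))} \in F_{(i-1)(n/k)}$. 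Two properties extracted from the proof of Proposition \ref{prop_shift} make this formal: shift is multiplicative (the maps $\phi_m$ and $\phi_{m+i}$ are ring isomorphisms, so their composition is a ring homomorphism) and shifts compose additively, $(h[a])[b] = h[a+b]$ (since each $\phi_j$ is an involution).

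Next I would verify that $P'$ and $P''$ are code places, i.e.\ that both $x_0(P')$ and $x_0(P'')$ lie outside $\Omega$. For $P'$, $x_0(P') = x_0(P) \notin \Omega$ because $P$ is a code place. For $P''$, $x_0(P'') = x_{(i-1)(n/k)}(P)$ by definition, so it suffices to show $x_s(P) \notin \Omega$ for every $s \le i(n/k)$. Using the characterization $\Omega = \{\alpha \in \F_{q^2} : \alpha(\alpha^{q-1}+1)=0\}$, if $x_s(P) \notin \Omega$ then $x_s(P)^q/(x_s(P)^{q-1}+1)$ is a well-defined nonzero element of $\F_{q^2}$, and the tower recursion $x_{s+1}^q + x_{s+1} = x_s^q/(x_s^{q-1}+1)$ forces $x_{s+1}(P)^q + x_{s+1}(P) \ne 0$, i.e.\ $x_{s+1}(P) \notin \Omega$; induction on $s$ from the base case $x_0(P) \notin \Omega$ finishes the claim. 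Well-definedness of $P''$ as an $\F_{q^2}$-rational place of $F_{n/k}$ then follows because the inherited $x$-values satisfy the $F_{n/k}$-tower relations.

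Finally I would evaluate. Since $f_j^{((i-1)(n/k))} \in F_{(i-1)(n/k)}$, its value at $P$ equals its value at $P' = P|_{F_{(i-1)(n/k)}}$. For the shifted factor, the substitution $x_s \mapsto x_{s+(i-1)(n/k)}$ that realizes shift by $(i-1)(n/k)$ is a ring embedding $F_{n/k} \hookrightarrow F_{i(n/k)}$ whose pullback of $P$ is, by the very definition of $P''$, the place $P''$; hence $g_\ell[(i-1)(n/k)](P) = g_\ell(P'')$. Substituting into the factorization, summing $w_r f_r^{(i(n/k))}(P)$ over $r = j + \ell q^{(i-1)(n/k)}$, using $w_r = (w^{(\ell)})_j$ from the definition of $\iota_\ell$, and recognizing the inner sum over $j$ as $\psi_{i-1}(w^{(\ell)})(P') = w^{(\ell)}(P')$ yields the identity of the proposition.

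The main subtlety sits in the factorization step: Section \ref{section_mostlyregular}'s construction admits two natural conventions for pairing digits of $r$ with the shifted factors, and only the one in which the top digit $\ell$ controls the maximally shifted factor makes $g_\ell[(i-1)(n/k)](P)$ come out as $g_\ell(P'')$ rather than as a $g_\ell$ evaluated at the restriction $P|_{F_{n/k}}$. The proposition tacitly fixes this convention consistently across all levels $i \in [1,k]$; once it is fixed, the rest of the argument is straightforward bookkeeping.
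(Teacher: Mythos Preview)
Your proposal is correct and follows essentially the same route as the paper: establish the factorization $\psi_i(w)=\sum_\ell \psi_{i-1}(w^{(\ell)})\,g_\ell[(i-1)(n/k)]$ from the definition of the $f_r$, then evaluate. The paper dispatches the code-place claim in one line by invoking \cite[Lemma~3.9]{GS} (code places are exactly those where every $x_t$ takes a value in $\F_{q^2}\setminus\Omega$), whereas you reprove that characterization by induction on the tower recursion; your observation about the digit-to-shift pairing convention is a genuine subtlety that the paper's proof simply asserts away with ``easy to see.''
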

\begin{proof}
By \cite[Lemma 3.9]{GS}, the code places of $F_m$ are precisely the places at which each $x_t$ has value in $\F_{q^2} \setminus \Omega$, subject to the relations defining the Garcia-Stichtenoth tower.  Both $P'$ and $P''$ have this form, so they are code places.

Next, by the definition of the $f_r$, it is easy to see that
\[
\psi_i(w) = \sum_{\ell=0}^{q^{n/k} - 1} \psi_{i-1}(w^{(\ell)}) \left(g_\ell[(i-1)(n/k)]\right).
\]
The claimed equation follows immediately.
\end{proof}

Observe that $w(P)$ looks like an element of a product matrix.  We can write down an explicit matrix product as follows.

Fix $i \in [1, k]$ and $\alpha \in \F_{q^2} \setminus \Omega$.  Let $\mc{P}_\alpha$ denote the set of code places $P$ of $F_{i(n/k)}$ for which $x_{(i-1)(n/k)}(P) = \alpha$, let $\mc{P}'_\alpha$ denote the set of code places $Q$ of $F_{(i-1)(n/k)}$ such that $x_{(i-1)(n/k)}(Q) = \alpha$, and let $\mc{P}''_\alpha$ denote the set of code place $R$ of $F_{n/k}$ such that $x_0(R) = \alpha$.  Then it is easy to see that for any $Q \in \mc{P}'_\alpha$ and $R \in \mc{P}''_\alpha$, there is a unique place $P \in \mc{P}_\alpha$ such that $P' = Q$ and $P'' = R$, where $P'$ and $P''$ are as in the above proposition.  Conversely, if $P \in \mc{P}_\alpha$, then $P' \in \mc{P}'_\alpha$ and $P'' \in \mc{P}''_\alpha$.

Easily $|\mc{P}'_\alpha| = q^{i(n/k)}$ and $|\mc{P}''_\alpha| = q^{n/k}$.  Let $Q_1^\alpha, \cdots, Q_{q^{i(n/k)}}^\alpha$ be an enumeration of $\mc{P}'_\alpha$, and let $R_1^\alpha, \cdots, R^\alpha_{q^{n/k}}$ be an enumeration of $\mc{P}''_\alpha$.

\begin{myprop}\label{mat_mult_lemma}
Let $i \in [1, k]$, and let $w \in \F_{q^2}^{q^{i(n/k)}}$.  Write $
w = \sum_{\ell=0}^{q^{n/k} - 1} \iota_\ell(w^{(\ell)})$
as in Proposition \ref{encoding_sum}.  For each $\alpha \in \F_{q^2} \setminus \Omega$, define a matrix $A^\alpha$ of shape $q^{(i-1)(n/k)} \times q^{n/k}$ and a matrix $B^\alpha$ of shape $q^{n/k} \times q^{n/k}$ by
\begin{align*}
	&A^\alpha_{st} = w^{(t - 1)}(Q^\alpha_s)
	&B^\alpha_{st} = g_{s-1}(R^\alpha_t).
\end{align*}
Then for every code place $P$ of $F_{i(n/k)}$, letting $\alpha = x_{(i-1)(n/k)}(P)$ and letting $s, t$ be such that $P' = Q^\alpha_s$ and $P'' = R^\alpha_t$, we have $w(P) = (A^\alpha B^\alpha)_{st}$.
\end{myprop}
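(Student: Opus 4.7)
The plan is to reduce the statement to a direct application of Proposition \ref{encoding_sum} followed by an index-unpacking exercise. First I would verify that the data $(\alpha, s, t)$ attached to the code place $P$ are well-defined. Since $P$ is a code place of $F_{i(n/k)}$, the characterization of code places invoked in the proof of Proposition \ref{encoding_sum} (from \cite[Lemma 3.9]{GS}) gives $x_{(i-1)(n/k)}(P) \in \F_{q^2} \setminus \Omega$, so $\alpha$ lies in the indexing set. The preceding discussion shows $P' \in \mc{P}'_\alpha$ and $P'' \in \mc{P}''_\alpha$, so the enumerations $Q_1^\alpha, \dots, Q_{q^{i(n/k)}}^\alpha$ and $R_1^\alpha, \dots, R_{q^{n/k}}^\alpha$ determine unique indices $s, t$ with $P' = Q_s^\alpha$ and $P'' = R_t^\alpha$.

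Next I would apply Proposition \ref{encoding_sum} to obtain
\[
w(P) = \sum_{\ell=0}^{q^{n/k} - 1} w^{(\ell)}(P')\, g_\ell(P'').
\]
Substituting $P' = Q_s^\alpha$ and $P'' = R_t^\alpha$ and then using the defining formulas $A^\alpha_{s, \ell+1} = w^{(\ell)}(Q_s^\alpha)$ and $B^\alpha_{\ell+1, t} = g_\ell(R_t^\alpha)$, reindexing $\ell \mapsto \ell + 1$ gives
\[
w(P) = \sum_{\ell'=1}^{q^{n/k}} A^\alpha_{s, \ell'}\, B^\alpha_{\ell', t} = (A^\alpha B^\alpha)_{st},
\]
which is the claim.

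There is no real obstacle here: the proposition is essentially a rebranding of Proposition \ref{encoding_sum} as a matrix product, designed so that encoding over $F_{i(n/k)}$ (restricted to the fiber over $\alpha$) becomes one multiplication of a $q^{(i-1)(n/k)} \times q^{n/k}$ matrix by a $q^{n/k} \times q^{n/k}$ matrix. The only thing to double-check is the consistency of the bijection between code places in $\mc{P}_\alpha$ and pairs $(Q_s^\alpha, R_t^\alpha) \in \mc{P}'_\alpha \times \mc{P}''_\alpha$, which was already established in the paragraph preceding the proposition.
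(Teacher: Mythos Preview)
Your proposal is correct and matches the paper's approach exactly: the paper's entire proof is the single sentence ``This is just a restatement of Proposition \ref{encoding_sum},'' and you have simply written out that restatement explicitly with the index bookkeeping.
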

\begin{proof}
This is just a restatement of Proposition \ref{encoding_sum}.
\end{proof}

Using this proposition, it is not too difficult to define an algorithm \textsc{Matrix-Encode} which encodes $v \in \F_{q^2}^{q^n}$ using a series of $k(q^2 - q)$ matrix multiplications of shape
\begin{equation}\label{mult_shape}
	\left(q^{n - n/k} \times q^{n/k}\right) \times \left(q^{n/k} \times q^{n/k}\right).
\end{equation}

\renewcommand{\figurename}{Algorithm}
\begin{figure}[H]
	\begin{algorithmic}[5]
		%
		\Procedure{Matrix-Encode}{$v \in \F_{q^2}^{q^n}$}
		\State $W_k \gets \{v\}$
		\For{$i$ from $k$ to 1}
		\State $W_{i-1} \gets \emptyset$
		\For{$w \in W_i$}
		\State Write $w = \sum_{\ell=0}^{q^{n/k} - 1} \iota_\ell(w^{(\ell)})$ as in Proposition \ref{encoding_sum}
		\State Add all $w^{(\ell)}$ to $W_{i-1}$
		\EndFor
		\EndFor
		\For{$i$ from 1 to $k$}
		\For{$\alpha \in \F_{q^2} \setminus \Omega$}
		\For{$w \in W_i$}
		\State Construct the matrix $A^\alpha_w$ corresponding to $w$ in Proposition \ref{mat_mult_lemma}, using the $w^{(\ell)}(Q)$ computed in iteration $i - 1$ (when $i = 1$, just use the values of the scalars $w^{(\ell)}$)
		\EndFor
		\State Let $\bar{A}^\alpha$ be the matrix made of all $A^\alpha_w$ stacked vertically, and let $B^\alpha$ be as in Proposition \ref{mat_mult_lemma}
		\State Multiply $\bar{A}^\alpha$ by $B^\alpha$, thus computing $w(P)$ for all $w \in W_i$ and code places $P$ of $F_{i(n/k)}$
		\EndFor
		\EndFor
		
		\EndProcedure
	\end{algorithmic}
	\caption{The algorithm \textsc{Matrix-Encode}.  It inputs $v \in \F_{q^2}^{q^n}$ and outputs $\ev(\psi(v))$.}
	\label{matrix_encode}
\end{figure}

\textbf{Encoding for messages of length longer than $q^n$:} To encode $v \in \F_{q^2}^{q^n(q^2 - q - kq - k + 1)}$, we just need to make $q^2 - (k+1)q$ calls to \textsc{Matrix-Encode}, using the fact that $f_{s q^n + t} = x_0^s f_t$.

\subsection{Complexity of the Encoding Algorithm}\label{appendix_complexity}
Performing the pre-computation using the algorithm in \cite{ADKSS} requires at most $(n/k)^3q^{3n/k + 4} = O((n/k)^3q^4 (q^n)^{3n/k})$ multiplications and divisions in $\F_{q^2}$.  Storing the evaluations requires space $O(q^{2n/k + 2}) = O(q^2(q^n)^{2/k})$, since there are $q^{n/k}$ functions $g_s$ and $O(q^{n/k + 2})$ code places of $F_{n/k}$.


Next, we compute the runtime of the encoding function for general $v \in \F_{q^2}^{q^n(q^2 - q - kq - k + 1)}$.  The runtime of each call to \textsc{Matrix-Encode} is just the runtime of $k(q^2 - q)$ matrix multiplications of shape (\ref{mult_shape}).  Then the runtime to encode general $v \in \F_{q^2}^{q^n(q^2 - q - kq - k + 1)}$ is the runtime of $k(q^2 - q)^2 = O(kq^4)$ such multiplications.  Thus using fast square matrix multiplication, we can encode $v$ using
\[
O\left(kq^4q^{n - 2n/k}(q^{n/k})^\omega \right) = O\left(kq^4(q^n)^{1 + \frac{\omega - 2}{k}}\right) 
\]
operations over $\F_{q^2}$, where $\omega$ is the exponent of (square) matrix multiplication.  This complete the proof of Theorem \ref{thm_main_encoding}.

Using the best known bound $\omega \le 2.37$ \cite{lG}, we attain the runtime $O\left(kq^4(q^n)^{1 + 0.37/k}\right)$.  When $k \ge 3$, we can instead use fast $(M^2 \times M) \times (M \times M)$ rectangular matrix multiplication; letting $\omega'$ be the exponent of such multiplication, we get an algorithm running in time $O(kq^4(q^n)^{1 + (\omega' - 3)/k})$. We compare the parameters for $k = 2$ and 3 in Table \ref{k_tradeoff} below. 

\begin{table}[h]
\centering
\begin{TAB}(@){c|c|c}{c|c|c|c|c|c}
$\bm{k}$ & \textbf{2} & \textbf{3} \\
\textbf{Preprocessing time} & $O(N^{1.5}\log^3 N)$ & $O(N\log ^3N)$ \\
\textbf{Encoding time with $\bm{\omega = 2}$} & $O(N)$ & $O(N)$ \\ 
\textbf{Encoding time with $\bm{\omega \approx 2.37}$} & $O(N^{1.19})$ & $O(N^{1.13})$ \\
\textbf{Encoding time with $\bm{\omega' \approx 3.34}$} & N/A & $O(N^{1.12})$ \\ 
\textbf{Smallest $\bm{q}$ beating Gilbert-Varshamov bound} & 19 & 32 \\ 
\end{TAB}
\caption{Comparison of encoding times and code quality for $k = 2, 3$.  Here $N$ is the code length, $\omega$ is the exponent of square matrix multiplication, and $\omega'$ is the exponent of $(M^2 \times M) \times (M \times M)$ rectangular matrix multiplication. Runtime dependence on $q$ is absorbed in the asymptotic notation as $q$ is a constant for each family of codes.}
\label{k_tradeoff}
\end{table}

\subsection{Systematic Subcodes for Interactive Oracle Proofs}\label{subsection_iop}
We sketch a construction of systematic subcodes that lowers the efficiency exponent of the IOPs in \cite{bcgrs}[Lem 7.2] from $c>3$ to $c>3/2$. For a positive integer $n$, let $G_n$ be the generator matrix for an instance of our code over $F_n$ which has rate and relative distance at least $1/4$; this exists so long as $q$ is sufficiently large as a function of $k$.  For any particular $\alpha \in \F_{q^2} \setminus \Omega$, there are $q^n$ rows of $G_n$ corresponding to points with $x_{n/2} = \alpha$; let $H$ denote $G_n$ restricted to these rows.  Then after rearranging rows and columns, we have $H = A \otimes B$, where $A$ is the restriction of $G_{n/2}$ to rows corresponding to points with $x_{n/2} = \alpha$, and $B$ is the restriction of $G_{n/2}$ to rows corresponding to points with $x_0 = \alpha$.  Letting $A$ and $B$ have ranks $r_1$ and $r_2$, by column reducing $G_{n/2}$ (in different ways for $A$ and $B$), we can take $A$ and $B$ to begin with the diagonal blocks $I_{r_1}$ and $I_{r_2}$.  Thus after applying column operations to $G_n$, we can assume that $H$ begins with the diagonal block $I_{r_1 r_2}$.  It follows that there is a systematic subcode of $G_n$'s code with dimension $r_1r_2$ and relative distance at least $1/4$.  Furthermore, this subcode can be encoded in time $O(N^{\omega/2})$ with preprocessing time $O(N^{3/2}\log^3 N)$, as with our original code.

It remains to show that we can always choose $\alpha$ so that $r_1r_2$ is a positive constant fraction of $N$.  The sum of $r_1r_2$ across all $\alpha$ is just the rank of $G_n$, which is $K \ge N/4$.  Thus there exists $\alpha$ for which $r_1r_2/N \ge 1/(4q^2)$, as desired.  We can find such an $\alpha$ during the preprocessing step in time $O(N^{3/2})$, since computing $r_1r_2$ for each $\alpha$ just requires finding the ranks of $2(q^2 - q)$ matrices of size $q^{n/2} \times q^{n/2}$.



\section{Fast Decoding Algorithms}\label{section_decoding}
Reed-Solomon codes are widespread in practice partly due to 
fast algebraic decoding algorithms: the Gorenstein-Zierler decoder, rational approximation using the Euclidean algorithm, the Berlekamp-Massey algorithm and fast Fourier decoders, to name a few. In particular, Reed-Solomon codes can be uniquely decoded in linear time up to the unique decoding limit. In a breakthrough, Sudan designed an algorithm to list decode Reed-Solomon codes beyond half the minimum distance \cite{sud}. List decoding is a relaxation of unique decoding where the decoder is allowed to output a list of messages and is deemed successful if the message sent is in the list. Shokrollahi and Wasserman soon generalized Sudan's algorithm to algebraic geometry codes \cite{SW}. Shortly thereafter, Guruswami and Sudan designed list decoders for both Reed-Solomon codes and algebraic geometry codes that improved on the error correction of previously known algorithms \cite{GS2}. A novelty they introduced was to use multiplicities in the interpolation step. 

We present a unique decoding algorithm that corrects a fraction of errors close to half the relative distance and a list decoding algorithm to correct beyond that. The algorithms are presented as specializations of the Shokrollahi-Wasserman algorithm to the Garcia-Stichtenoth tower. In particular, we obtain the unique decoding algorithm as a special case of the list decoding algorithm. 

Our results are as follows.

\begin{mythm}\label{thm_main_decoding}
For the codes described in Proposition \ref{prop_existence} with dimension $K$ and length $N = q^n(q^2 - q)$, there exist randomized Las Vegas algorithms to:
\begin{itemize}
  \item pre-compute a representation of the code at the decoder in $O(N^{3/2}\log^3 N)$ time; this representation occupies $O(N^{2/k})$ space
  
  \item uniquely decode up to $\frac{1}{2}\left(N - K - 1 - 4\frac{kq + k - 1}{q^2 - q}N\right)$ errors in $O(N^{2 + (\omega - 2)/k} \log^2 N)$ expected time, where $\omega$ is the matrix multiplication exponent
  
  \item for the list decoding algorithm, additionally pre-compute a matrix for the lifting step in $O(N^\omega)$ time, occupying $O(N^2)$ space
  
  \item list decode up to $N \left(1-\sqrt{2\left(R+2\frac{kq+k-1}{q(q-1)}\right)}-2\frac{kq+k-1}{q(q-1)} \right)$ errors with list size at most $\sqrt{\frac{2(q-1)}{2 + R(q-1)}}$ in $O(N^{2 + (\omega - 2)/k} \log^2 N)$ expected time.
\end{itemize}
\end{mythm}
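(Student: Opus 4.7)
I will specialize the Shokrollahi--Wasserman list decoding algorithm \cite{SW} to our Garcia--Stichtenoth codes, replacing their generic linear algebra and root finding steps by subquadratic variants that exploit the same tower structure exploited by Theorem \ref{thm_main_encoding}. The algorithm has the usual three phases: interpolation (solve a linear system over an enlarged Riemann--Roch space), root finding (factor out the message), and verification (re-encode candidates and compare). Unique decoding arises as the $L = 1$ specialization.

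\textbf{Pre-computation.} The decoder pre-computes exactly the data pre-computed by the encoder (\S\ref{pre-computation}), namely the evaluations at the code places of a basis of regular functions in $F_{n/k}$ up to the required weight; this costs $O(N^{3/2}\log^3 N)$ time and $O(N^{2/k})$ space. For the list decoder, I additionally pre-compute and store a dense $O(N) \times O(N)$ matrix that represents the linear map used by a Roth--Ruckenstein-style lifting over the coefficient Riemann--Roch space of $Q(Y)$. Constructing this matrix column-by-column reduces to $O(N)$ encodings plus $O(N)$ field operations each, i.e., $O(N^{\omega})$ time and $O(N^2)$ space, as claimed.

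\textbf{Interpolation.} For a list parameter $L$, seek a nonzero polynomial $Q(Y) = Q_0 + Q_1 Y + \cdots + Q_L Y^L$ with $Q_j \in \mc{L}(D_j)$, where $D_j$ is $(a - jK)\,P^{(n)}_\infty$ shifted by the appropriate multiple of the pole divisor $G$ coming from Proposition \ref{G_bound}, and $a$ is tuned so that $\sum_j \dim \mc{L}(D_j) > N$. The interpolation constraints $\sum_j Q_j(P) y_P^j = 0$ for each code place $P$ form an $O(N) \times O(N)$ linear system. Crucially, a matrix--vector product against this system is a weighted sum of evaluations $Q_j(P) y_P^j$, which, since $D_j$ is an extension of the message-space divisor, factors through the same Baby-Step/Giant-Step decomposition used by \textsc{Matrix-Encode}. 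Hence one matrix--vector product costs $O(N^{1+(\omega-2)/k})$ operations. Feeding this into Wiedemann's algorithm \cite{Wie} solves the system with $O(N)$ matrix--vector products, giving total expected runtime $O(N^{2+(\omega-2)/k}\log^2 N)$, where the $\log^2 N$ factor absorbs the minimal polynomial bookkeeping and finite-field arithmetic.

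\textbf{Root finding and verification.} For $L = 1$, the unique root $-Q_0/Q_1$ can be read off (a division in the function field, which after evaluating at the code places is just $N$ field divisions), and one encoding verifies. For $L > 1$, I enumerate roots in the message Riemann--Roch space by the Roth--Ruckenstein recursion: at each step the next coefficient of a candidate root is computed by one matrix--vector multiplication against the lifting matrix pre-computed above, costing $O(N^2)$ per candidate; the number of roots is at most $L$, and each of the at most $L$ outputs is verified by one encoding. This is dominated by the interpolation cost. The error radii and list size follow from the standard Shokrollahi--Wasserman calculation \cite{SW}, with our relative-defect quantity $(kq+k-1)/(q(q-1))$ replacing $g/N$; choosing $L = \lceil \sqrt{2(q-1)/(2 + R(q-1))}\,\rceil$ produces the list decoding radius stated in the theorem, and $L = 1$ produces the unique decoding radius.

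\textbf{Main obstacle.} The critical point is the interpolation step: getting the runtime down to $O(N^{2+(\omega-2)/k}\log^2 N)$ requires that the interpolation matrix admits the same structured matrix--vector product as \textsc{Matrix-Encode}. This forces the coefficient Riemann--Roch spaces $\mc{L}(D_j)$ to be chosen as extensions of the message space, compatible with the shifting construction of \S\ref{section_mostlyregular}; without this structural alignment, Wiedemann would be applied to a generic dense system and the exponent would balloon to $3$. Verifying that the extended bases still have the required dimension bounds (so that a nonzero $Q$ exists and $Q(f) = 0$ whenever $f$'s encoding agrees with the received word in more than $\deg(D)$ positions) is the main calculational content, and mirrors the proof of Proposition \ref{prop_existence}.
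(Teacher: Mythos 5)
Your overall scaffolding---pre-compute as in \S\ref{pre-computation}, interpolate via Wiedemann's black-box solver with matrix--vector products implemented through the encoding algorithm, then root-find and verify---matches the paper's plan, and your parameter/list-size bookkeeping is consistent with the paper's choices of $\ell$ and $B$. The interpolation step is essentially identical to the paper, including the crucial observation that the coefficient spaces $\mc{L}(G + w_j P^{(n)}_\infty)$ are populated by the same $f_r$'s used by \textsc{Matrix-Encode}, so one matrix--vector product against the interpolation matrix costs $O(N^{1+(\omega-2)/k})$.

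The genuine gap is in root finding for $\ell > 1$. You invoke a ``Roth--Ruckenstein-style lifting'' that determines candidate roots coefficient-by-coefficient via a pre-computed dense matrix, at $O(N^2)$ per candidate. This is both different from the paper and under-specified: you do not say what linear map the matrix realizes, and the Roth--Ruckenstein recursion in the AG setting requires repeated shift-and-divide steps $H(T) \mapsto H(f_0 + T)/T^m$ in the function field, each acting on $O(N)$-dimensional Riemann--Roch coefficient spaces and each determining only one ``digit'' of a root; with $K = \Theta(N)$ coefficients to determine, that recursion as described would cost closer to $O(N^3)$ per candidate, not $O(N^2)$. The paper avoids this entirely by a projection-and-lift strategy you do not mention: it constructs (via Artin--Schreier theory, climbing the tower) a single place $P$ of $F_n$ of degree $D \approx \deg(G) + K$, so that evaluation at $P$ is injective on the message space $\mc{L}(G + (K-1)P^{(n)}_\infty)$; it reduces the interpolation polynomial $H(T)$ modulo $P$ and finds its roots in $\F_{q^{2D}}$ in soft-linear time via Kaltofen--Shoup/Kedlaya--Umans; and it then lifts each root back to the message space by solving a triangular $D \times K$ linear system whose triangular form $L$ (together with the change-of-basis matrix $R$) is the object pre-computed in $O(N^\omega)$ time and $O(N^2)$ space. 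Without the high-degree place construction and the modular reduction, your root-finding step does not reach the claimed $O(N^{2+(\omega-2)/k}\log^2 N)$ bound, and the justification for the $O(N^\omega)$/$O(N^2)$ pre-computation is unsupported.
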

Setting $k = 2$ yields the decoding portion of Theorem \ref{theorem_main_2}.

\subsection{Pre-computing a Representation of the Code} This pre-computation step is the same as for the encoding algorithm (see Section \ref{pre-computation}) and is deterministic.  It is needed so that we may call our encoding algorithm as a sub-routine.  Additional pre-computation for the list decoding algorithm is discussed in Section \ref{root_finding}.

\subsection{Modified Shokrollahi-Wasserman Algorithm}
Consider codes of block length $N=q^n(q^2-q)$ and dimension $K$ constructed in \S4 with parameter $k$. Let $P_1, P_2, \dots, P_N$ denote the code places, which are places of $F_n$, and let $y=(y_1,y_2,\ldots,y_N) \in \F_{q^2}^N$ denote the received word, where $y_i$ is the (possibly errored) evaluation at $P_i$.  Let $\ell$ be a bound on the number of messages allowed in the list. Let $B$ be an agreement parameter (determined later), that is, we need to correct fewer than $N-B$ errors. The algorithm first interpolates a nonzero polynomial $$H(T):=u_0+u_1T+\ldots+u_{\ell -1}T^{\ell -1}+u_\ell T^\ell  \in F_n[T]$$ in an indeterminate $T$ such that every message in the list (that is, every message whose encoding agrees with the received word at more than $B$ evaluation places) is a root of $H(T)$. Then the roots of $H(T)$ that are in the message Riemann-Roch space $\mc{L}(G+(K-1)P^{(n)}_\infty)$ are enumerated as a list of candidate messages. The encoding algorithm is finally used to check which messages sufficiently agree with the received word and indeed belong in the list.

To construct such a polynomial $H(T)$, we insist
\begin{equation}\label{equation_interpolation1}
H(y_i)(P_i) =\left( \sum_{j=0}^{\ell} u_j(P_i)y_i^j\right) = 0, \forall i\in\{1,2,\ldots,N\}
\end{equation} 
and that 
\begin{equation}\label{equation_interpolation2}
u_j \in \mc{L}(G+w_jP^{(n)}_\infty), \forall j \in \{0,1,\ldots,\ell\} 	
\end{equation}
where $w_j:= B-(\ell+1)\deg(G)-(K-1)j$. The latter constraint $u_j \in \mc{L}(G+w_jP^{(n)}_\infty)$ ensures that when we substitute a function $f \in \mc{L}(G+(K-1)P^{(n)}_\infty)$, the resulting function $H(f)$ lies in $$\mc{L}\left((\ell+1)G + (B - (\ell+1)\deg(G))P^{(n)}_\infty\right),$$ hence has a pole divisor of degree at most $B$.

\noindent Suppose $\ell $ and $B$ are such that $B \ge (N+1)/(\ell +1) + \ell (K - 1 + 2\deg(G))/2 + \deg(G) - 1$.  We claim that we can construct a nonzero polynomial $H(T)$ satisfying the constraints.  We attempt to populate each coefficient space $\mc{L}(G+w_jP^{(n)}_\infty)$ with enough of the functions constructed in \S3. In particular, we enforce the second constraint (equation \ref{equation_interpolation2}) by insisting $u_j$ be in the span of $\{f_0,f_1,\ldots,f_{w_j}\} \subseteq \mc{L}(G+w_jP^{(n)}_\infty)$. (If $w_j < 0$, we take $u_j = 0$.) Writing each $u_j$ as an unknown linear combination of $\{f_0,f_1,\ldots,f_{w_j}\}$, the first constraint (equation \ref{equation_interpolation1}) is an $\F_{q^2}$-linear system in at least $w_0+w_1+\ldots+w_\ell+\ell+1$ variables and $N$ constraints.  The condition on $\ell , B$ ensures
\begin{align*}
w_0+w_1+\ldots+w_\ell+\ell+1 = (\ell +1)(B - (\ell +1)\deg(G) - \ell (K-1)/2 + 1) \ge N+1 > N,
\end{align*}
proving our claim.

Henceforth, fix $\ell := \lfloor \sqrt{2N/(K - 1 + 2\deg(G))}\rfloor$ and $B:=\lceil \sqrt{2N(K - 1 + 2\deg(G))} + \deg(G) \rceil$.  To prove that we can construct $H(T)$ satisfying the constraints for these values of $\ell$ and $B$, observe that
\begin{align*}
&(N+1)/(\ell +1) + \ell (K - 1 + 2\deg(G))/2 + \deg(G) - 1 \\
&\le N/(\ell +1) + \ell (K - 1 + 2\deg(G))/2 + \deg(G) \\
&\le \frac{N}{\sqrt{2N/(K - 1 + 2\deg(G))}} + \frac{\sqrt{2N(K - 1 + 2\deg(G))}}{2} + \deg(G) \\
&\le B,
\end{align*}
hence the above claim applies.


Say $f \in \mc{L}(G+(K-1)P^{(n)}_\infty)$ agrees with $(y_1,y_2,\ldots,y_N)$ at more than $B$ places. Then $H(f)$ has a zero at more than $B$ places yet pole degree at most $B$. Hence $H(f)=0$ and $f$ is indeed a root of $H(T)$. Thus we can tolerate fewer than $N-B$ errors.    

From the proof of proposition \ref{G_bound}, $1/(q-1) < \deg(G)/N \le \frac{kq+k-1}{q(q-1)}$. 
Thus with list size at most $\sqrt{\frac{2(q-1)}{2 + R(q-1)}}$,
we decode up to $N \left(1-\sqrt{2\left(R+2\frac{kq+k-1}{q(q-1)}\right)} - \frac{kq+k-1}{q(q-1)}\right)$ errors.

In the subsequent subsections, we show how to interpolate $H(T)$ and find its roots in the message space in subcubic time. Our code constructions are stated (for instance in Theorem \ref{theorem_main_2}) as families of codes of increasing block length $N$  for each $q$ and $R$. Hence $q$ and $R$ shall be treated as constants independent of $N$ in the subsequent complexity estimates. In particular, the list size bound $\ell$ will be treated as a constant independent of $N$.

\subsection{Fast Interpolation using Black-Box Linear Algebra}
Consider the matrix 
\begin{equation*}
	M:=\begin{bmatrix}
	f_0(P_1) & \ldots & f_{w_0}(P_1) & y_1 f_0(P_1) & \ldots & y_1 f_{w_1}(P_1) & \ldots\ldots\ldots  & y_1^\ell f_0(P_1)  & \ldots & y_1^\ell f_{w_\ell}(P_1)\\
	f_0(P_2) & \ldots & f_{w_0}(P_2) & y_2 f_0(P_2) & \ldots & y_2 f_{w_1}(P_2) & \ldots\ldots\ldots  & y_2^\ell f_0(P_2)  & \ldots & y_2^\ell f_{w_\ell}(P_2)\\
	\vdots & \ddots & \vdots & \vdots & \ddots & \vdots & \ddots\ddots\ddots  & \vdots  & \ddots & \vdots\\
	f_0(P_N) & \ldots & f_{w_0}(P_N) & y_N f_0(P_N) & \ldots & y_N f_{w_1}(P_N) & \ldots\ldots\ldots  & y_N^\ell f_0(P_N)  & \ldots & y_N^\ell f_{w_\ell}(P_N)
	\end{bmatrix}
\end{equation*}
corresponding to the linear system in equation \ref{equation_interpolation1}. Given a column vector $$a:=(a_{0,0},\ldots,a_{0,w_0},a_{1,0},\ldots,a_{1,w_1},\ldots\ldots,a_{\ell,0},\ldots,a_{\ell,w_\ell})^t$$
over $\F_{q^2}$, the matrix-vector product 
$$Ma= \left(\sum_{j=0}^{w_0}a_{0,j}f_j(P_i)+y_i\sum_{j=0}^{w_1}a_{1,j}f_j(P_i)+\ldots+y_i^\ell\sum_{j=0}^{w_\ell}a_{\ell,j}f_j(P_i),\ i=1,2,\ldots,N\right)^t$$
can be computed in subquadratic time as follows. For each $b \in \{0,1,\ldots,\ell\}$, 
$$\sum_{j=0}^{w_b}a_{b,j}f_j(P_i), \forall i\in \{1,2,\ldots,N\}$$ can be computed in $O(N^{1 + (\omega-2)/k})$ time using the encoding algorithm in \S~\ref{section_encoding} by viewing it as encoding the function $\sum_{j=0}^{w_b}a_{b,j}f_j$. Taking the inner product with  $(y_1^b,y_2^b,\ldots,y_N^b)^t$ yields $$y_i^b\sum_{j=0}^{w_b}a_{b,j}f_j(P_i), \forall i\in \{1,2,\ldots,N\}$$ in linear time. Adding up these terms obtained for $b\leq \ell$ yields $Ma$ in $O(N^{1 + (\omega-2)/k})$ time. We invoke Wiedemann's algorithm \cite{Wie} to find a nonzero solution to $Mx=0$ and obtain the interpolation polynomial. Since matrix-vector products take $O(N^{1 + (\omega-2)/k})$ time, the linear system is solved in $O(N^{2+(\omega-2)/k}\log^2N)$ expected time.

\subsection{Unique Decoding}
We obtain our unique decoding algorithm by choosing $\ell=1$ instead of the above choice of $l$. Since $H(T)$ is now degree one, the root finding step is trivial and involves just one division of functions. For $\ell=1$, we may choose $B = \frac{1}{2}(N+K) + 2\deg(G)$, allowing us to correct $\frac{1}{2}(N-K-1) - 2\deg(G)$ errors. The unique decoding bound assures that the code can correct $(N-K-\deg(G)-1)/2$ errors, that is, a $(1-R-\deg(G)/N)/2$ fraction of errors. Our algorithm is guaranteed to correct a $(1-R-4\deg(G)/N)/2$ fraction of errors, falling short by the small term $(3/2)\deg(G)/N \leq (3/2)(kq+k-1)/(q^2-q)$. There are ways (analogous to \cite{fr,sjmjh}) to modify the algorithm and correct up to the unique decoding assurance. 
We refrain from detailing the changes since list decoding subsumes such improvements.

A question, which we leave open, is if unique decoding could be performed in subquadratic expected time. To this end, one might consider the matrix 
\begin{equation*}
M=\begin{bmatrix}
f_0(P_1) & f_1(P_1) & \ldots & f_{w_0}(P_1) & y_1 f_0(P_1) &y_1 f_1(P_1) & \ldots & y_1 f_{w_1}(P_1) \\
f_0(P_2) & f_1(P_2) & \ldots & f_{w_0}(P_2) & y_2 f_0(P_2) &y_2 f_1(P_2) & \ldots & y_2 f_{w_1}(P_2) \\
\vdots & \vdots & \ddots & \vdots & \vdots & \vdots & \ddots & \vdots \\
f_0(P_N) & f_1(P_N) & \ldots & f_{w_0}(P_N) & y_N f_0(P_N) &y_N f_1(P_N) & \ldots & y_N f_{w_1}(P_N) \\
\end{bmatrix}
\end{equation*}
corresponding to the linear system of the interpolation step. The bottleneck in unique decoding is computing a nonzero element in the null space of $M$. If $M$ were to have sublinear (that is, $o(N)$) displacement rank (see \cite{kkm} for definition), then by \cite{bjs} this task can be accomplished in sub quadratic time. Displacement ranks of interpolation matrices arising in list decoding were bounded by Olshevsky and Shokrollahi \cite[\S~5]{os}. However, their bounds apply only to codes from plane curves, and it is not immediate if their techniques imply sublinear displacement rank for $M$.

\subsection{Root Finding for List Decoding}\label{root_finding}
The main algorithmic challenge left in list decoding is root finding: to enumerate all the roots of $H(Y)$ in $\mc{L}(G+(K-1)P^{(n)}_\infty)$. Our strategy is to first pick a place $P$ in $F_n$ of degree just greater than $\deg(G)+K-1$, thus ensuring that the evaluation map from $\mc{L}(G+(K-1)P^{(n)}_\infty)$ to the residue field at $P$ is injective. Then find the roots of the reduction of $H(T)$ at $P$ and lift the roots to $\mc{L}(G+(K-1)P^{(n)}_\infty)$.

Such a $P$ can be found in nearly linear time by Artin-Schreier theory. Let $D = \deg(G)+K \le N$. Pick an $\alpha_0 \in \{\alpha \in \F_{q^{2D}}\ |\ \alpha^q+\alpha \neq 0\} \bigcap \left(\F_{q^{2D}}\setminus\F_{q^{2(D-1)}}\right)$.  To find such an $\alpha_0$, choose $\alpha_0$ to be a root of a random degree $D$ irreducible polynomial over $\F_{q^2}$ (in time nearly linear in $D$ using the algorithm of Couveignes and Lercier \cite{CL}). With probability at least $1-1/q \geq 1/2$, $\alpha_0^q+\alpha_0 \neq 0$. Once such an $\alpha_0$ is found, we look for a place in $F_n$ above $(x_0-\alpha_0)$. Observing $\alpha_0^q/(\alpha_0^{q-1}+1)$ is the fraction of the norm $\alpha_0^{q+1}$ and trace $\alpha_0^q+\alpha_0$ of $\alpha_0$ down to $\F_{q^{2(D-1)}}$, we see $\alpha_0^q/(\alpha_0^{q-1}+1) \in \F_{q^{2(D-1)}}$. Since the trace from $\F_{q^{2D}}$ down to $\F_{q^{2(D-1)}}$ is surjective, 
$$x_1^q+x_1 = \frac{\alpha_0^q}{\alpha_0^{q-1}+1}$$
has a solution $x_1 = \alpha_1 \in \F_{q^{2D}}$. Such an $\alpha_1$ can be found either using Hilbert's theorem 90 or using a generic root finding algorithm. The root finding algorithm of Kaltofen-Shoup \cite{KS}[Algorithm E, Theorem 1] implemented using the Kedlaya-Umans modular composition algorithm \cite{kedlaya_umans} takes expected time $O(N^{1 + o(1)})$. Further, $\alpha_1 \in \{\alpha \in \F_{q^{2D}}| \alpha^q+\alpha \neq 0\}$ since $\alpha_0^q/(\alpha_0^{q-1}+1) \neq 0$. We can thus iterate this process up the tower and find a place $P:=(\alpha_0:\alpha_1:\ldots:\alpha_n:1) $.  Since we insist $\alpha_0 \in \F_{q^{2D}} \setminus \F_{q^{2(D-1)}}$, the degree of $P$ is indeed $D$. The computation of the point $P$ may be moved to pre-processing (taking $O(N^{1 + o(1)})$ expected time and $O(N\log N)$ storage).

Given $P$, we reduce the coefficients of $H(T)$ modulo $P$. To perform this reduction in time quadratic in $N$, we first pre-compute and store $\{f_0(P),f_1(P),\ldots,f_{w_0}(P)\}$ in time $O(N^{1 + \omega/k}\log N + N^2\log N \log\log N)$ and storage space $O(N^2)$, as follows.  We assume the $f_r$'s are defined in terms of the basis for regular functions in $F_{n/k}$ returned by the algorithm in \cite{ADKSS}.  This algorithm writes each regular function in $F_{n/k}$ as an $\F_{q^2}$-linear combination of a fixed set of $O(N^{1/k}\log N)$ functions, each of which can be written as an $O(\log N)$ size arithmetic circuit in terms of $x_0, x_1, \dots, x_{n/k}$ (see \cite[Theorem 6]{ADKSS}).  We can evaluate all of these functions at each of the places $(\alpha_0 : \alpha_1 : \ldots : \alpha_{n/k} : 1), (\alpha_{n/k} : \alpha_{n/k+1} : \ldots : \alpha_{2n/k} : 1), \dots, (\alpha_{(k-1)n/k} : \alpha_{(k-1)n/k + 1} : \ldots : \alpha_n : 1)$ of $F_{n/k}$ in time $O(N^{1 + 1/k}\log^3 N\log\log N)$, noting that because all $\alpha_i$ are not roots of $\alpha^q + \alpha = 0$, using the obvious circuit will only require arithmetic operations over $\F_{q^{2D}}$ (i.e., no pole cancelling is required).  Representing these values as a $2D \times O(N^{1/k}\log N)$ matrix over $\F_{q^2}$, we can then multiply by the $O(N^{1/k}\log N) \times q^{n/k}$ matrix over $\F_{q^2}$ which writes the regular functions in $F_{n/k}$ in terms of these functions, yielding the evaluations of a basis for $F_{n/k}$ at each of the places above in time $O(N^{1+\omega/k}\log N)$.  To finish the pre-computation, we take the $w_0 + 1 = O(N)$ products corresponding to the definitions of the $f_r$'s at $P$, each of which involves $k$ multiplications over $\F_{q^{2D}}$; in total, this takes time $O(N^2\log N \log\log N)$.  

Once $H(T)$ is reduced, since its degree is a constant independent of $N$, all its roots can be enumerated in time nearly linear in $N$ using the Kaltofen-Shoup root finding algorithm \cite{KS} implemented using the Kedlaya-Umans modular composition.

The lifting of roots modulo $P$ to the message space takes time quadratic in $N$ with pre-processing requiring runtime exponent $\omega$ and storage quadratic in $N$. To this end, pre-compute a new basis $\{g_0,g_1,\ldots,g_{K-1}\}$ of $\mbox{Span}\{f_0,f_1,\ldots,f_{K-1}\}$. The basis $\{g_0,g_1,\ldots,g_{K-1}\}$ is chosen such that the $D$ by $K$ matrix $L$ over $\F_{q^2}$ whose $i^{th}$ column is $g_i(P)$ (written in a fixed basis for $\F_{q^{2D}}$ over $\F_{q^2}$) is lower triangular. Such a basis can be found by column reduction of the corresponding matrix whose $i^{th}$ column is $f_i(P)$, with runtime exponent $\omega$. In addition to $L$, we store the matrix $R$ expressing $\{g_0,g_1,\ldots,g_{K-1}\}$ as an $\F_{q^2}$-linear combination of  $\{f_0,f_1,\ldots,f_{K-1}\}$. Now given a residue modulo $P$, by solving the linear system corresponding to $L$, we can find an $\F_{q^2}$-linear combination of $\{g_0,g_1,\ldots,g_{K-1}\}$ that evaluates to that residue (if one exists). This takes quadratic time since $L$ is in lower triangular form. The matrix $R$ then expresses this lift as an $\F_{q^2}$-linear combination of $\{f_0,f_1,\ldots,f_{K-1}\}$, as desired.


\bibliography{general_bib}{}
\bibliographystyle{plain}

\end{document}